\newtheorem{proposition}{Proposition}
\newcommand\BibTeX{{\rmfamily B\kern-.05em \textsc{i\kern-.025em b}\kern-.08em
T\kern-.1667em\lower.7ex\hbox{E}\kern-.125emX}}
\begin{document}

\runninghead{Yue et al.}

\title{Signal Timing Optimization for Mixed Connected Automated Traffic Based on A Markov Delay Approximation}

\author{Ximin Yue\affilnum{1}, Yunlong Zhang\affilnum{1}, Zihao Li\affilnum{1}, and Yang Zhou\affilnum{1}}

\affiliation{\affilnum{1}Zachry Department of Civil \& Environmental Engineering, Texas A\&M University, College Station, TX 77843, USA.}

\corrauth{Zihao Li, scottlzh@tamu.edu}

\begin{abstract}
Connected Automated Vehicles (CAVs) offer unparalleled opportunities to revolutionize existing transportation systems. In the near future, CAVs and human-driven vehicles (HDVs) are expected to coexist, forming a mixed traffic system. Although several prototype traffic signal systems leveraging CAVs have been developed, a simple yet realistic approximation of mixed traffic delay and optimal signal timing at intersections remains elusive. This paper presents an analytical approximation for delay and optimal cycle length at an isolated intersection of a mixed traffic using a stochastic framework that combines Markov chain analysis, a car following model, and queuing theory. Given the intricate nature of mixed traffic delay, the proposed framework systematically incorporates the impacts of multiple factors, such as the distinct arrival and departure behaviors and headway characteristics of CAVs and HDVs, through mathematical derivations to ensure both realism and analytical tractability. Subsequently, closed-form expressions for intersection delay and optimal cycle length are derived. Numerical experiments are then conducted to validate the model and provide insights into the dynamics of mixed traffic delays at signalized intersections.

\vspace{6pt}
\textbf{Keywords}: Signal Timing; Mixed traffic; CAV; Delay approximation; Markov chain
\end{abstract}

\maketitle

\section{INTRODUCTION}
Delay experienced by HDVs at signalized intersections are often exacerbated by slower reaction times, lack of coordination with signal control systems, and the need to maintain a large headway \citep{wang2019review}. The advancement of CAV technologies, particularly in vehicular automation and vehicle-to-everything (V2X) communication,offers promising potential to reduce traffic delays \citep{naghsh2018delay, khattak2023impact, ma2022lyapunov, allison2024performance, yue2024hybrid}. Recent research has increasingly focused on optimizing CAV trajectories to enhance traffic flow, improve intersection throughput, and increase energy efficiency \citep{li2018piecewise, ma2017parsimonious, zhou2017parsimonious, guo2019joint}. With the continued rise in CAV market penetration rate (MPR), these capabilities are expected to yield substantial reductions in delays at signalized intersections \citep{ghiasi2017mixed}.

Intersection delay, used as the primary indicator for assessing the level of service (LOS) at signalized intersections, has been extensively studied in the context of CAVs. Prevailing analyses often heavily rely on simulation-based methodologies \citep{baz2020intersection}. However, the simulation-based approach fails to provide a comprehensive understanding of the detailed impacts of CAV design on signalized interaction delay, such as those involving CAV control parameters. Still, it also suffers from limitations due to its computational intensity. On the other hand, analytical delay models serve as essential tools for evaluating control performance and adjusting signal plans by transportation agencies. Most existing analytical delay models for signalized intersections are based on HDV-dominated traffic and can be traced back to as early as 1952. When \cite{wardrop1952road} estimated the average HDV delay by assuming a uniform traffic arrival distribution and a linear relationship between traffic flow and density. Subsequently, \cite{webster1966traffic} developed Webster's delay model based on deterministic queuing theory to empirically calculate the average delay and derive the corresponding optimal cycle length by differentiating the equation for overall delay at the intersection with respect to the cycle length. Later, a more general model based on empirical adjustment, the Highway Capacity Manual (HCM) delay model \citep{manual2000highway}, was developed. One of its key elements is the HCM delay model, which is used to estimate delays experienced by vehicles at different types of facilities, such as intersections, roundabouts, or freeway segments. 

Leveraging insights from previous studies, several models have been proposed to better approximate intersection delay and optimize signal timing. \cite{abhigna2022delay} developed a multiple linear regression model to estimate delays at uncontrolled intersections in mixed traffic by comparing various traditional models such as HCM, Drew's model, and Harder's model \citep{drew1968traffic, harders1968capacity}. \cite{su2024cycle} derived an optimal cycle length model by incorporating time-variant effects into Webster’s delay formula using an adjustment factor method. \cite{ali2021adaptive} presented an adaptive traffic signal control approach using fuzzy logic and the Webster’s formulas to compute optimal cycle length based on real-time traffic conditions. \cite{saha2017delay} developed a modified delay estimation model based on the HCM framework, using data from seven four-legged signalized intersections. This model estimates the total and the average delay through queue length analysis and numerical integration.

Nevertheless, applying these analytical models to mixed traffic analysis is not straightforward, primarily due to the distinct driving behaviors between CAVs and HDVs, with special consideration for the leading effects of CAVs \citep{jiang2017eco}. Unlike HDVs, which respond reactively to signal changes and often stop at intersections, CAVs can coordinate with traffic signals via V2X communication to adjust their approach trajectories. This vehicle-signal coordination enables smoother, stop-free arrivals and allows CAVs to maintain tighter headways, enhancing both mobility and capacity. Compared to HDV leaders, CAV leaders can leverage advanced control strategies, V2X communication, and precise actuation to optimize traffic flow, especially at signalized intersections. Consequently, transportation agencies need to account for the impact of mixed traffic flow when estimating delays at signalized intersections to support informed decision-making and effective signal timing planning (e.g., optimal cycle length).

To advance intersection signal design in the era of mixed traffic, there is a need for a simple yet effective stochastic analytical framework. This paper proposes an approach to approximate delays and determine the optimal cycle length at an isolated signalized intersection, accounting for the distinct behaviors of both CAVs and HDVs. Specifically, the framework estimates traffic delay through a traffic composition analysis enabled by a Markov chain model. By capturing differences in arrival behavior and headway characteristics between CAVs and HDVs, mixed traffic delays are analytically derived using queuing theory. The optimal cycle length is then obtained by differentiating the overall delay function with respect to the cycle length. Finally, numerical experiments and sensitivity analyses are conducted to provide a comprehensive understanding of signal control performance under mixed traffic conditions.

The remainder of this paper is organized as follows: \textbf{Section 2} introduces the assumption and proposed stochastic capacity analytical framework under mixed traffic flow. \textbf{Section 3} presents the approximated delay model for an isolated intersection with mixed traffic flow and its optimal cycle length. \textbf{Section 4} conducts numerical experiments, highlighting the impact of multiple factors on control delay. Finally, \textbf{Section 5} gives the conclusion and directions for future research.

\section{CAPACITY DERIVATION FOR MIXED TRAFFIC}
\label{sec:capacity-derivation}
In CAV systems, multi-leader topology (MLT) in V2V communication is widely used for car-following control, as it allows each vehicle to utilize information from multiple predecessors to smooth traffic and ensure string stability \citep{wang2020cooperative, bian2019reducing, zhou2019robust, yue2024hybrid, li2024enhancing}. Therefore, MLT is adopted in this study to support further analysis. In this framework, the ego CAV receives traffic states (spacing, velocity, acceleration) from its $n$ preceding CAVs and transmits its own state to $n$ following CAVs, enabling localized coordination within the CAV subset of the platoon.

This V2V communication enables a shorter headway when multiple consecutive CAVs appear in a platoon, thereby improving overall traffic capacity \cite{chen2023stochastic}. However, when an HDV appears between CAVs, it breaks the chain of information sharing and disrupts coordination, often resulting in a longer headway. As a result, the headways in mixed traffic are not uniform, but instead depend heavily on the specific sequence and grouping of CAVs and HDVs within a platoon. To capture this stochastic dependency, we model the vehicle-type sequence using a Discrete Time Markov Chain (DTMC). Based on this formulation, we compute the steady-state probabilities of different headway configurations associated with various platoon compositions. These probabilities are then used to estimate the long-run expected mixed traffic capacity, forming the analytical basis for the subsequent intersection delay approximation.

\subsection{Steady State Mixed Traffic Sequence Analysis}
The steady state is a state in which the probabilities of transitioning between different states no longer change over time. In our case, we consider a state space consisting of various lengths of CAV platoons. Hence, steady-state probability refers to the probabilities of each platoon type at the intersection when the whole system reaches equilibrium. Figure \ref{fig:perf} illustrates the recurrent state transitions between different types of mixed traffic platoons, depicting the penetration rates ($p$) of the CAVs as the probability that the next follower will be a CAV. 

\begin{figure}[ht]
\centering
    \includegraphics[width=0.45\textwidth]{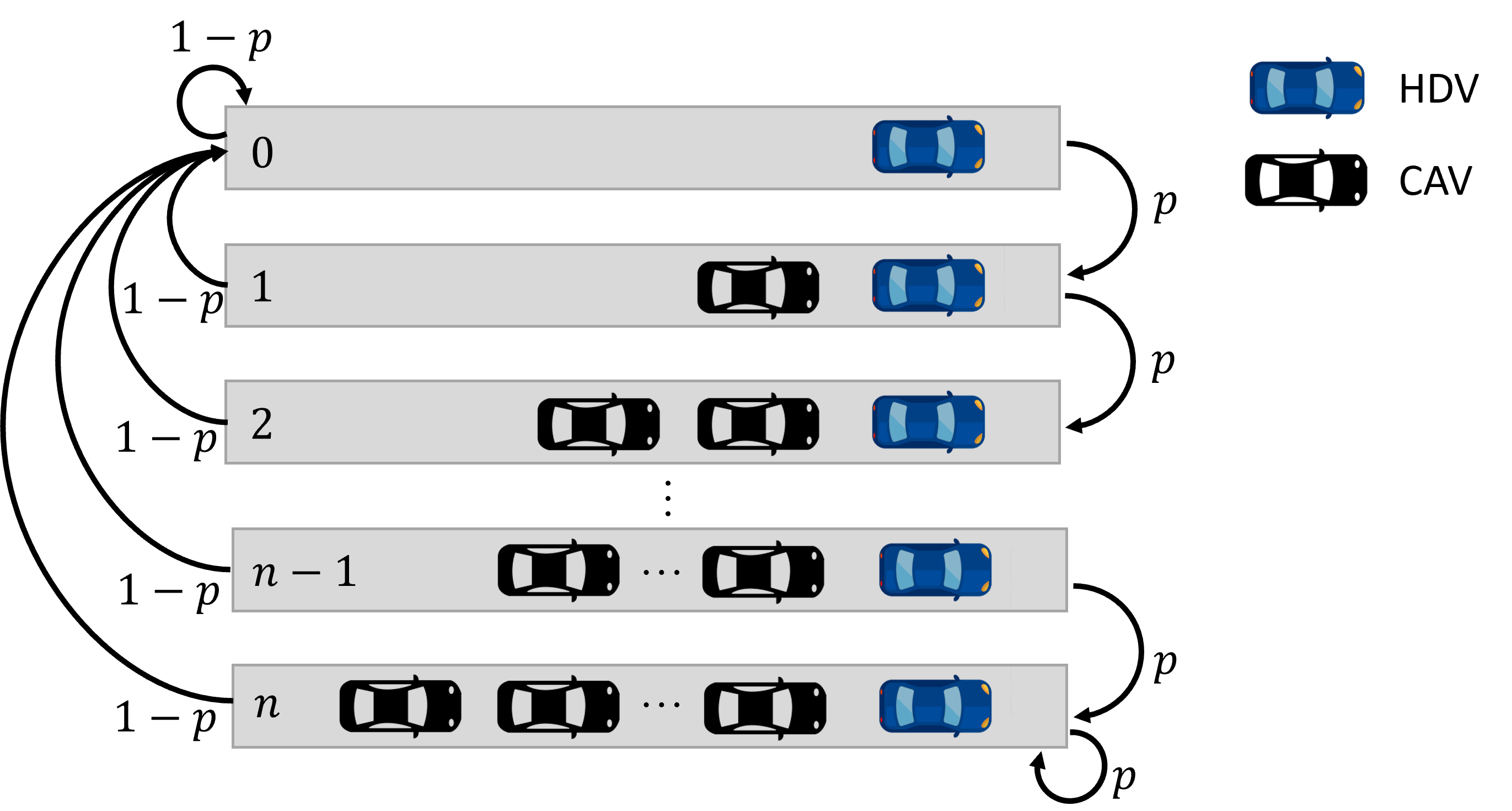}
    \caption{Illustative example of DTMC for mixed traffic platoon}\label{fig:perf}
\end{figure}

The DTMC used to model platoon patterns consists of states $n$, where the state number denotes the value of successive CAVs in the platoon and $n$ represents the maximum vehicle communication capacity within the communication range. Thus, the formula for the state space $S$, $ S = \{0, 1, 2, \ldots, i, \ldots, n-1, n\} $, and the transition probability matrix $P = [P_{ij}] \in \mathbb{R}^{(n+1) \times (n+1)}$, $P_{ij}$ is the transition probability from states $i$ to states $j$

\begin{equation}
 P_{ij} =
\begin{cases}
p & \text{if } j = i + 1, j = i = n \\
1-p & \text{if } j = 0 \\
0 & \text{otherwise}
\end{cases}
\label{eq:transition_matrix}
\end{equation}

Then, we can denote $\pi$  as long-run (steady-state) probabilities to describe the probability of each state when the DMTC reaches equilibrium: 

\begin{equation}
\pi = [\pi_0, \pi_1, \pi_2, \ldots, \pi_i, \ldots, \pi_{n-1}, \pi_n]
\label{eq:pi_vector}
\end{equation}
where $\pi_i$ denotes the steady-state probabilities of each state. 

The long-run or steady-state properties of the DTMC for a transition probability with $i$ rows and $j$ columns are defined as follows: 

\begin{equation}
    \pi = \pi P \\
    \label{eq:steady_state}
\end{equation}
\begin{equation}
    \sum_{i=0}^{n} \pi_i = 1
    \label{eq:sum_constraint}
\end{equation}

The distribution of vehicles within a platoon directly influences the time gap between consecutive vehicles, as variations in vehicle placement impact the time gaps and synchronization needed to maintain safe and efficient traffic flow. With Eqs.~\eqref{eq:transition_matrix}--\eqref{eq:sum_constraint}, we can use DTMC to determine the steady-state probability of time gaps under different vehicular distributions in the platoon, which correspondingly builds the base to solve the long-run expected capacity of the intersection, which is shown below:

\begin{equation}
\pi_i = \frac{p^i}{\frac{p^n}{1-p} + \sum_{m=0}^{n-1}p^m}
\label{eq:pi_closed_form}
\end{equation}\textbf{}

\subsection{Capacity Analysis}
Based on the steady-state probability of different vehicular distribution in the platoon, we can derive the expected time gap $E[\tau]$ for mixed traffic.

\begin{equation}
E[\tau] = \pi_0 \tau_{HDV} + \sum_{i=1}^{n} \pi_i \tau_{CAV,i}
\label{eq:expected_gap}
\end{equation}
Where $\tau_{\text{HDV}}$ is the desired time gap for HDVs and $\tau_{{CAV}, i}$ is the desired time gap for $i$ successive CAVs in the platoon.

Unlike traditional analysis, which treats all of the desired time gap for different successive CAVs in the platoon are the same $\tau_{{CAV}, i} \equiv \tau_{CAV}$, here we use the string stability criteria proposed by \cite{chen2023stochastic} to derive the minimum time gap between CAVs, since capacity is defined as the maximum sustainable traffic throughput. The desired time gap for CAV $\tau_{CAV, i}$ follows:

\begin{equation}
\tau_{\text{CAV},i} = \max \left( \tau_{safe}, \frac{4\omega_{v}}{\omega_{e}\cdot(1+i)} \right)
\label{eq:min_gap}
\end{equation}
Where $\omega_{e}$ is the deviation from equilibrium spacing feedback gains, $\omega_{v}$ denotes speed difference feedback gains. $\tau_{safe}$ is a predefined value of the time gap $\tau$ due to safety concerns.

By the relationship between the car-following time gap and traffic capacity $c$, we can have the following:

\begin{equation}
c = \frac{1}{E[\tau] + \frac{L}{v_{free}}}
\label{eq:capacity_eq}
\end{equation}
Where $E[\tau]$ represents the expected desired traffic time gap, $L$ denotes the average length of the vehicle, and $v_{\text{free}}$ is the free flow speed.

Given Eqs.~\eqref{eq:pi_closed_form}--\eqref{eq:capacity_eq}, we derived the expected capacity for mixed traffic system as:
\begin{equation}
c = \frac{1}{
    \frac{\tau_{HDV}} {\frac{p^n}{1-p} + \sum_{m=0}^{n-1} p^m}+ \frac{L}{v_{free}} +
     \sum_{i=1}^{n} \frac{p^i \tau_{CAV,i}}{\frac{p^n}{1-p}+ \sum_{m=0}^{n-1} p^m}
}
\end{equation}

\section{DELAY APPROXIMATIONS AND ANALYTICAL MODEL}

As traffic volume approaches the intersection's capacity, the intersection delay increases. When demand exceeds capacity, excessive queues form, leading to significant delays and even gridlock. This section introduces a delay approximation method based on queuing theory that considers the behavioral differences between CAVs and HDVs to develop an analytical delay model for the mixed-traffic intersection, building based on the previous analysis of mixed traffic capacity.

\subsection{Intersection arrival \& departure behaviors and delay approximation}
CAVs and HDVs exhibit different arrival and departure behaviors at isolated signalized intersections. As shown in Figures \ref{fig:ADCAV} (a) and (b), CAVs can communicate with the intersection signal to anticipate upcoming speed reductions, enabling them to adjust their speed in advance and accelerate strategically to pass through the intersection at free-flow speed without stopping or compromising the effectiveness of the green phase.
\begin{center}
  \includegraphics[width=0.45\textwidth]{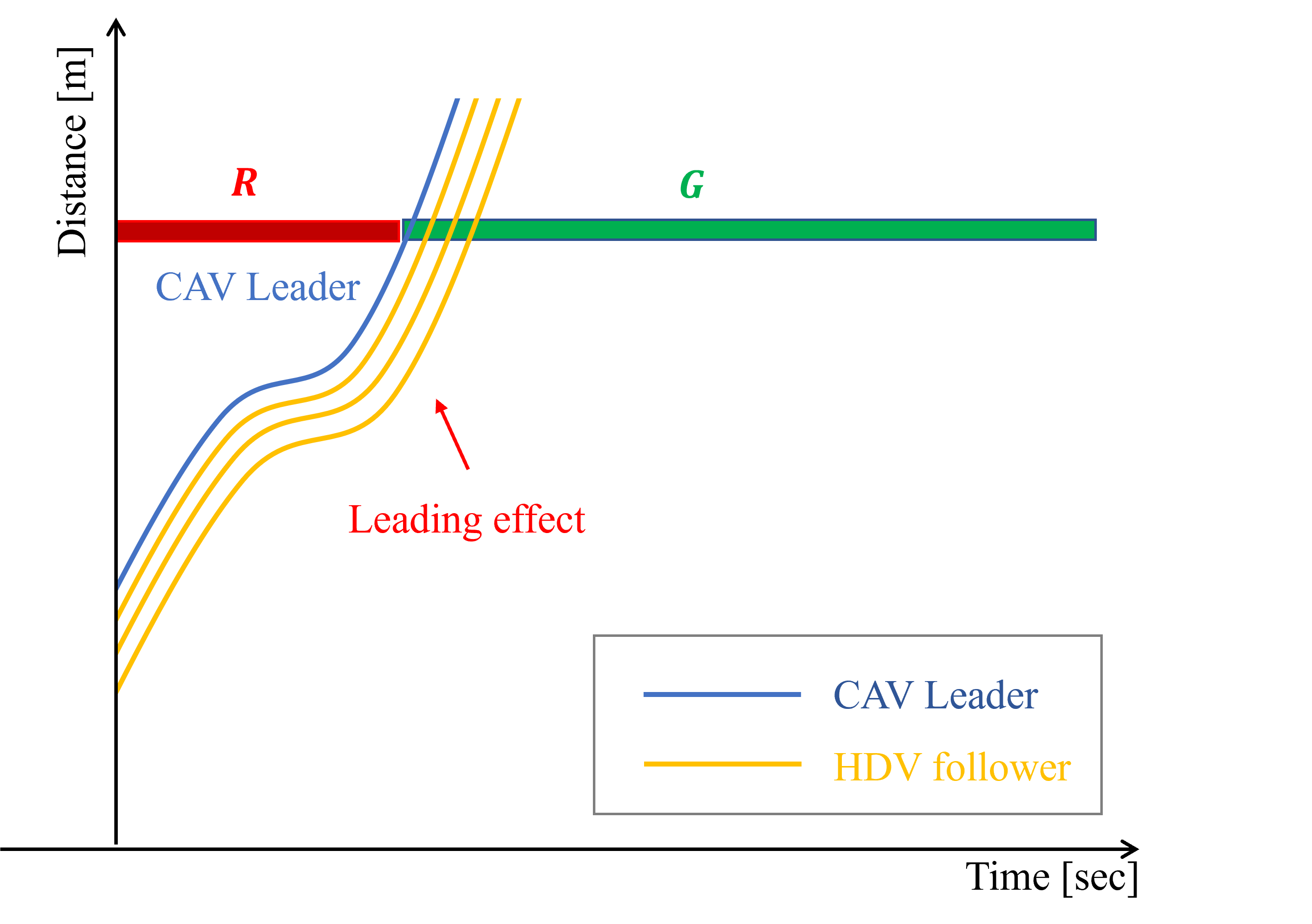}
  
  \textbf{(a)} Arrival behavior of CAV-led platoon
  \vspace{-6pt}

  \includegraphics[width=0.45\textwidth]{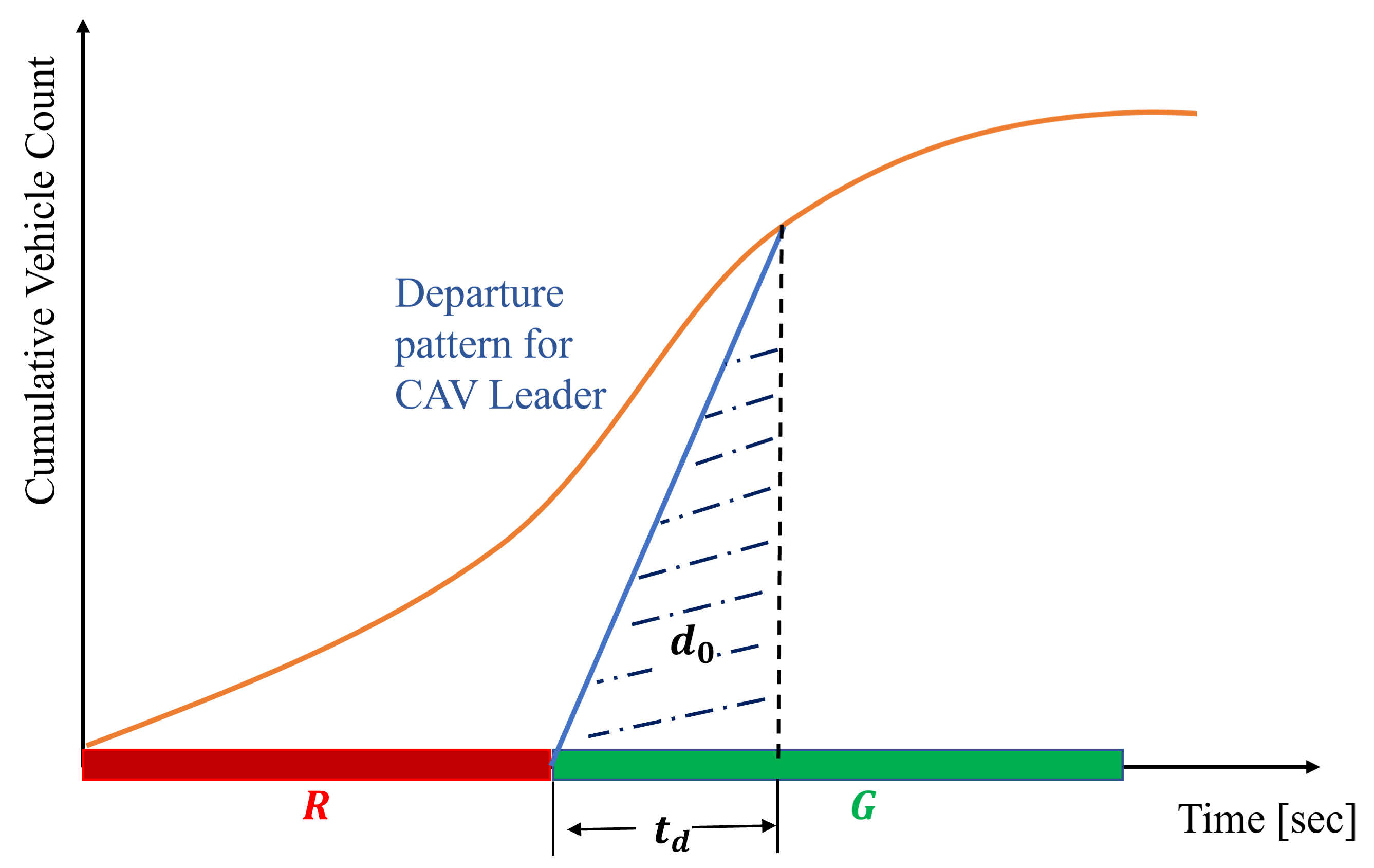}
  
  \textbf{(b)} Departure behavior of CAV-led platoon
  \vspace{-6pt}
  \captionof{figure}{Arrival and departure behavior of CAV-led platoon.}
  \label{fig:ADCAV}
\end{center}

In contrast, as shown in Figures \ref{fig:ADHDV} (a) and (b),  HDVs rely solely on visual cues and react to the signal state in real-time. When approaching a red light, HDVs gradually slow down and come to a complete stop at the intersection. When the signal turns green, human drivers always have time lost due to their reaction time and acceleration time to reach the free-flow speed. Comparing the departure behaviors of the CAV-led platoon with the HDV-led platoon (Figure \ref{fig:ADCAV} (b) and Figure \ref{fig:ADHDV} (b)), we can observe that the CAV-led platoon exhibits lower control delay due to their ability to coordinate with the intersection, highlighting the advantages of CAVs in reducing delays at the intersection. 

\begin{center}
  \includegraphics[width=0.45\textwidth]{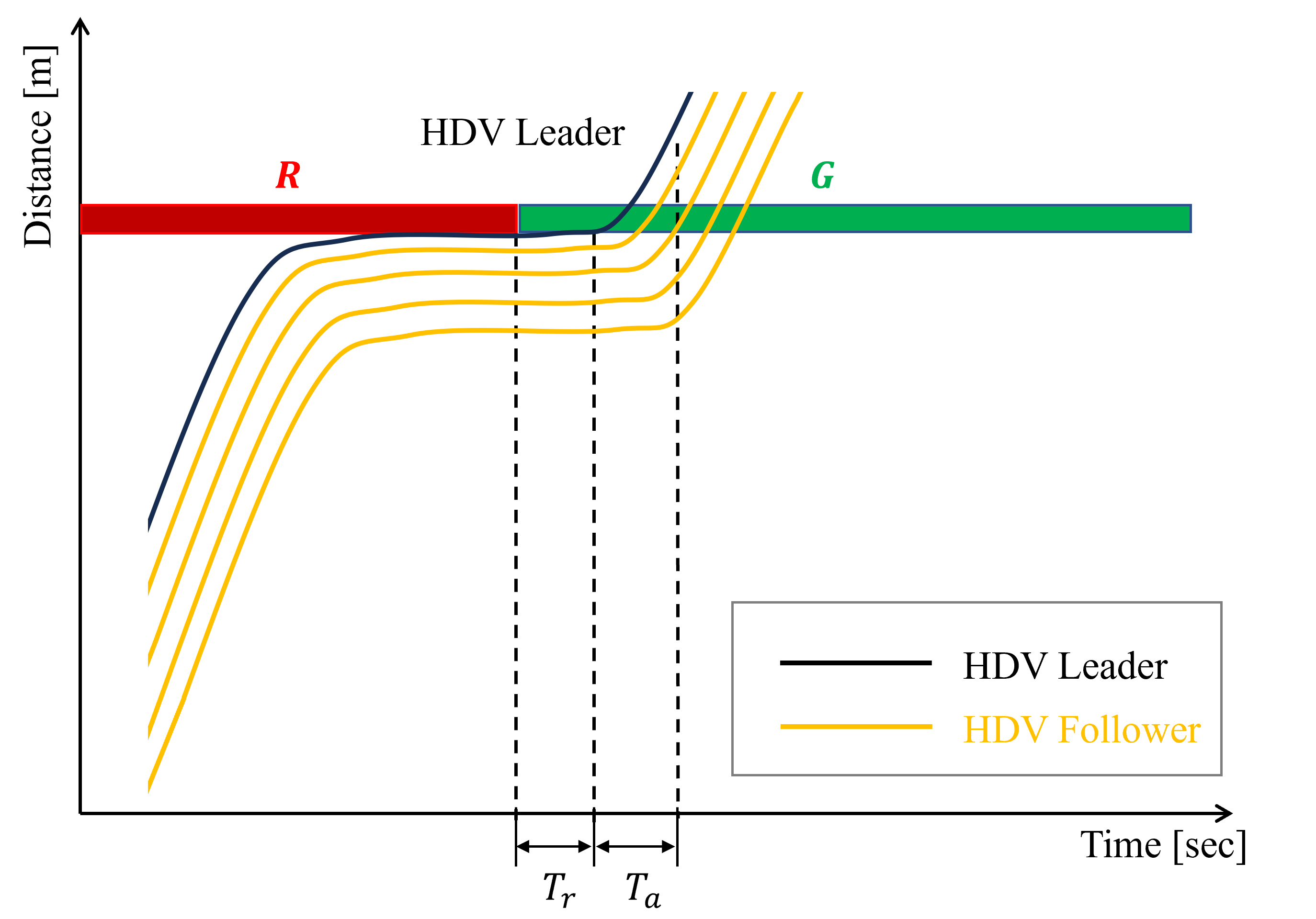}
  
  \textbf{(a)} Arrival behavior of HDV-led platoon

  \vspace{-6pt}

  \includegraphics[width=0.45\textwidth]{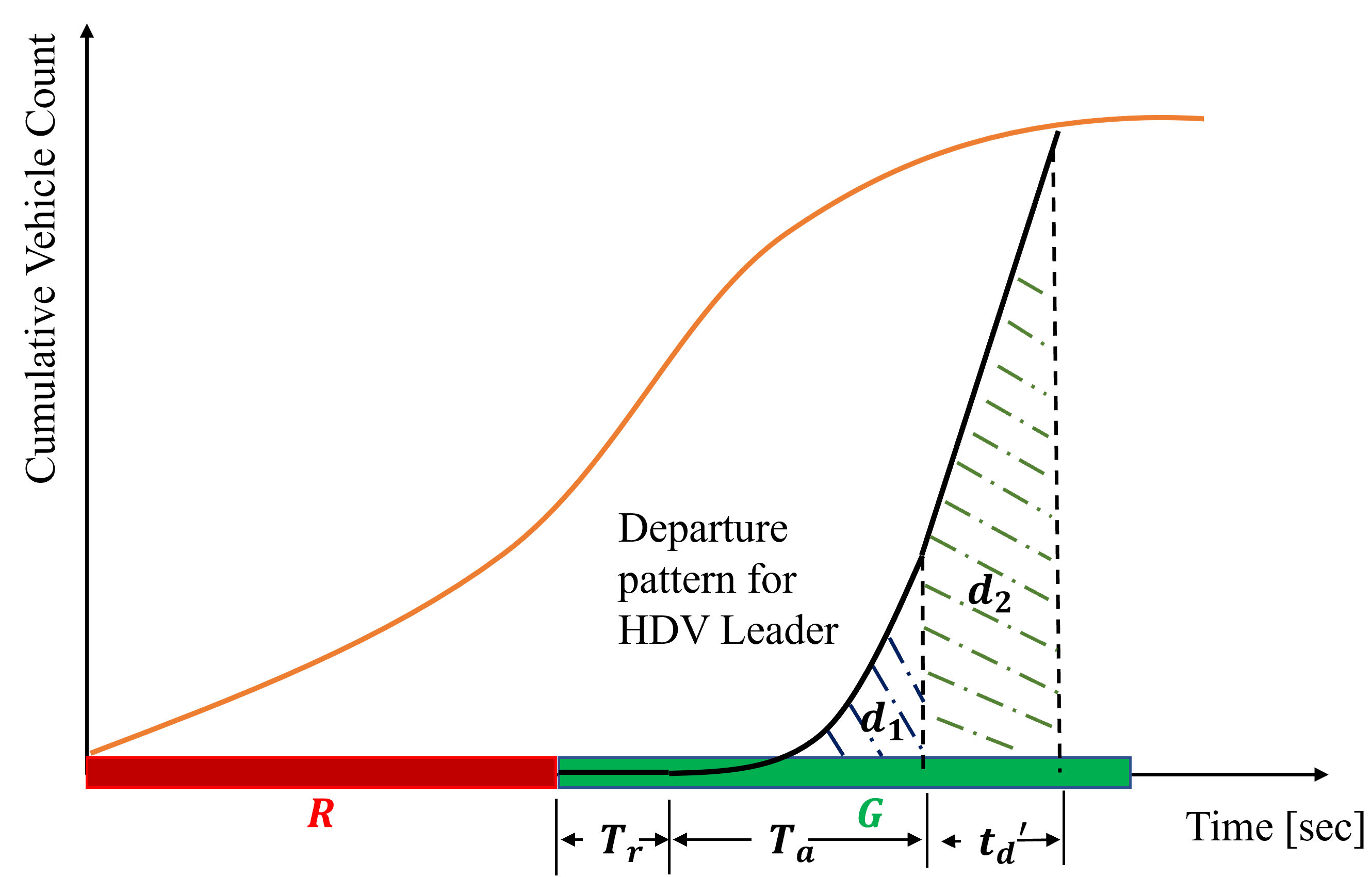}
  
  \textbf{(b)} Departure behavior of HDV-led platoon
  \vspace{-6pt}
  \captionof{figure}{Arrival and departure behavior of HDV-led platoon.}
  \label{fig:ADHDV}
\end{center}

To approximate the delay of these two departure behaviors, the HDV acceleration phase is modeled as a quadratic function \citep{zhou2017rolling}, where HDV-led platoon can reach the capacity $c$ (veh/$s$) after the acceleration process time $T_a$. The intersection parameters include $R$ (red time), $G$ (green time), $T_r$ (HDV reaction time).

According to queuing theory, the total delay at the intersection during a cycle for the departure behavior of CAV-led platoon, denoted as $D_\text{CAV}$, is given by the area between the cumulative arrival curve $q(t)$ and the departure curve:

\begin{equation}
D_{\text{CAV}} = \int_{0}^{R+t_d} q(t) \, dt - \frac{1}{2} t_d c^2
\label{eq:DCAV}
\end{equation}

where $t_d$ is the time that the queue dissipates in the departure behavior of a CAV-led platoon, which satisfies that: 

\begin{equation}
q(R+t_d)=c \cdot t_d
\end{equation}

Based on the approximated delay model and under the assumption that vehicle acceleration follows a quadratic profile, the cumulative departure curve of the HDV-led platoon can be represented as a piecewise function encompassing three distinct phases. These phases correspond to the typical HDV departure behavior at a signalized intersection:
\begin{itemize}
    \item \textbf{Phase 1} spans from $t = 0$ to $t = R + T_r$, during which no vehicles depart due to the red signal and the human driver's reaction time.
    \item \textbf{Phase 2} covers the interval from $t = R + T_r$ to $t = R + T_r + T_a$, representing the acceleration phase of the HDV-led platoon. During this phase, the departure rate increases nonlinearly, modeled using a quadratic function.
    \item \textbf{Phase 3} begins at $t = R + T_r + T_a$ and continues until $t = R + T_r + T_a + t_d'$, where the departure rate stabilizes at the saturation flow rate as the platoon reaches free-flow speed.
\end{itemize}
The cumulative departure function capturing these phases is presented below.

\begin{equation}
\resizebox{1\hsize}{!}{$
N(t) =
\begin{cases}
0 & t \in [0, R+T_r] \\
\frac{c}{2T_a} t^2 - \frac{c(R+T_r)}{T_a} t + \frac{c}{2T_a} (R^2 + 2RT_r + T_r^2) & t \in [R+T_r, R+T_r+T_a] \\
ct - c \left( \frac{T_a}{2} + R + T_r \right) & t \in [R+T_r+T_a, R+T_r+T_a + t_d']
\end{cases}
$}
\end{equation}
where $N(t)$ is the cumulative departure vehicles of the HDV-led departure platoons at the time $t$.

Therefore, similar to Eq.~\ref{eq:DCAV}, the total delay at the intersection during a cycle for the departure characteristics of a HDV-led platoon, denoted as $D_\text{HDV}$, is given by the area between the cumulative arrival curve $q(t)$ and the corresponding HDV departure curve:
\begin{equation}
\resizebox{\linewidth}{!}{$
\begin{split}
D_\text{HDV} = & \int_{0}^{R+T_r+T_a+t_d'} q(t) dt \\
& - \int_{R+T_r}^{R+T_r+T_a} \left( \frac{c}{2T_a} t^2 - \frac{c(R+T_r)}{T_a} t + \frac{c}{2T_a} (R^2 + 2T_r R + T_r^2) \right) dt \\
& - \int_{R+T_r+T_a}^{R+T_r+T_a+t_d'} \left( ct - c\left(\frac{T_a}{2} + R + T_r\right) \right) dt \\
= & \int_{0}^{R+T_r+T_a+t_d'} q(t)dt - \frac{cT_a^2}{6} - \frac{t_d' c}{2T} \left( \frac{c}{2T_a} + q(R+T_r+T_a+t_d') \right)
\end{split}
$}
\end{equation}
where $t_{d}^{'}$ is the time that the queue diminishes, following: 

\begin{equation}
q(R+T_r+T_a+t_{d}^{'} )=c\cdot t_{d}^{'}+  \frac{c}{2T_a}
\end{equation}


\subsection{Analytical delay models of the intersection}

To provide a direct understanding of the impact of CAVs on the signalized intersection delay while ensuring the analytical form of delay, we considered the curve $q(t)$ as a constant arrival rate $\bar{q}$ for simplification. Figure \ref{fig:4} (a) \& (b) present the different departure behavior of the CAV-led and the HDV-led platoon at the isolated signalized intersection.

\begin{center}
  \includegraphics[width=0.44\textwidth]{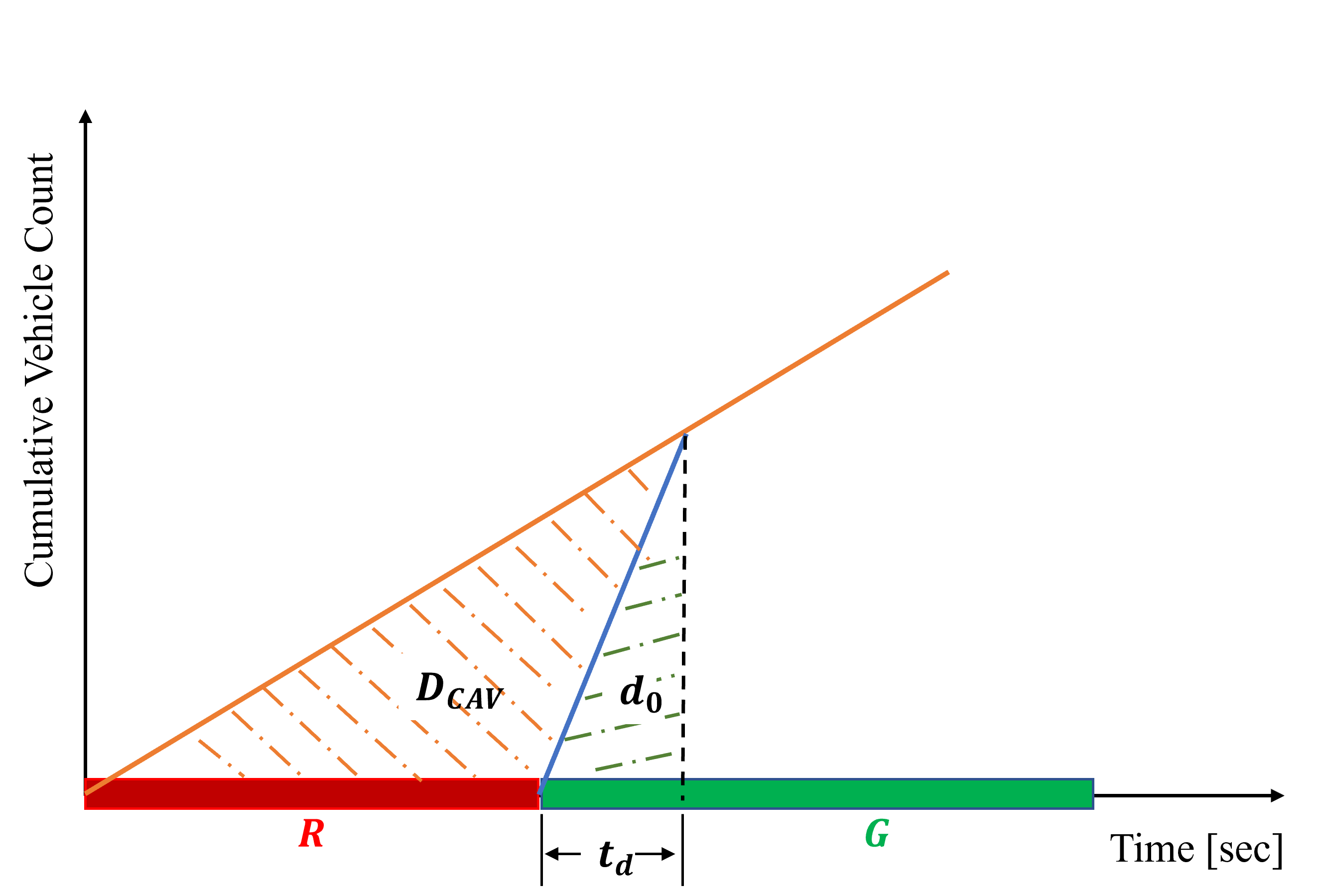}
  
  \textbf{(a)} Delay model for CAVs

  \vspace{1em}

  \includegraphics[width=0.4\textwidth]{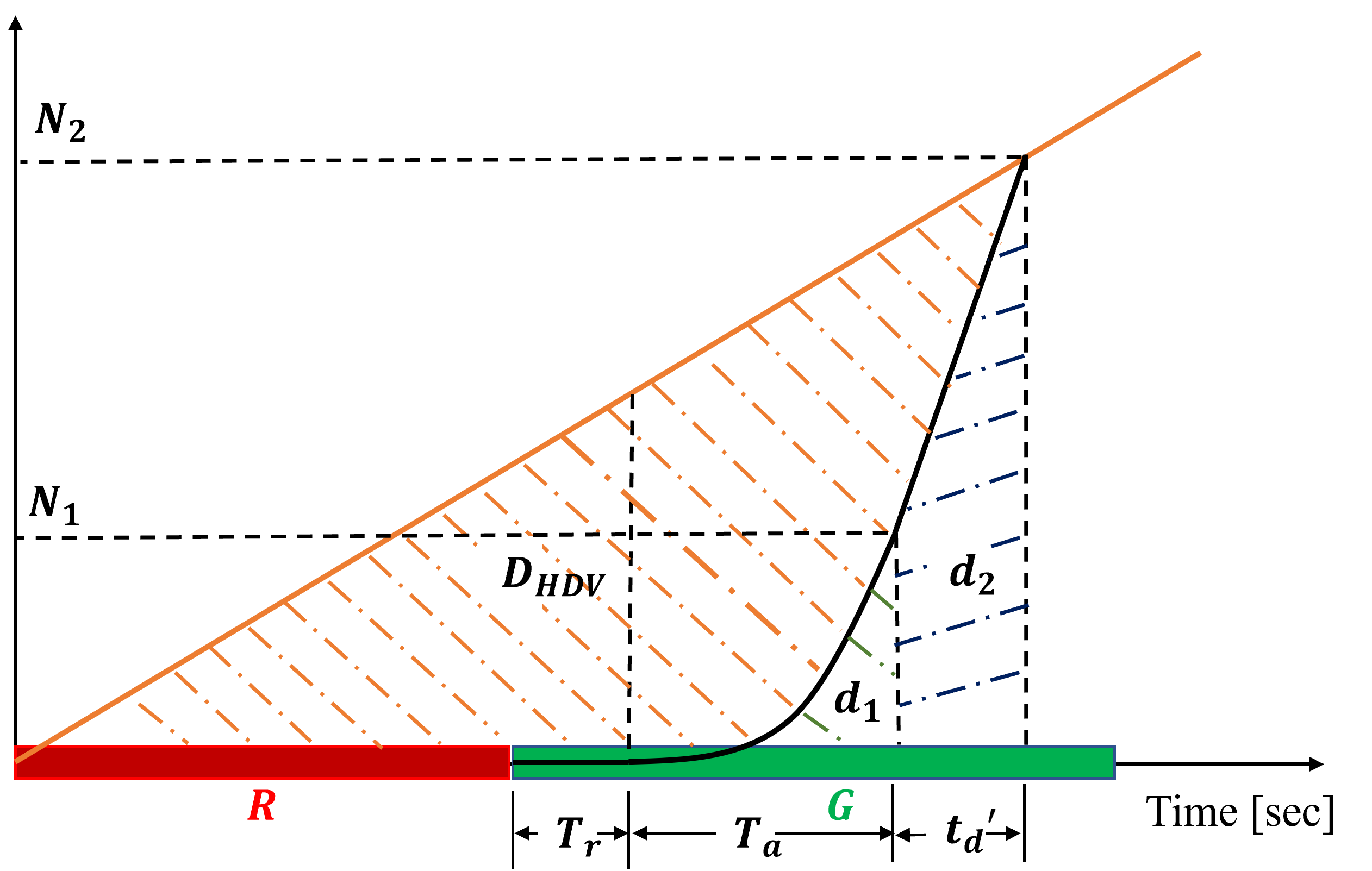}
  
  \textbf{(b)} Delay model for HDVs

  \captionof{figure}{Arrival and departure behavior of an HDV-led platoon.}
  \label{fig:4}
\end{center}

Specifically, the total delay of the CAV-led platoon is represented by the area $D_\text{CAV}$ in Figure~\ref{fig:4} (a), while the total delay of the HDV-led platoon corresponds to the area $D_\text{HDV}$ in Figure~\ref{fig:4} (b). In these illustrations, the shaded area $d_0$ represents the total free-flow delay experienced by the CAV-led platoon, whereas the combined area $d_1 + d_2$ indicates the total free-flow delay for the HDV-led platoon. The equations of $D_\text{CAV}$ and $D_\text{HDV}$ are shown below:

\begin{equation}
\label{eq:distance_models}
\begin{aligned}
D_\text{CAV} &= \frac{1}{2} q (R + t_d)^2 - d_0, \\
D_\text{HDV} &= \frac{1}{2} q (R + T_a + T_r + t_d')^2 - d_1 - d_2
\end{aligned}
\end{equation}

After formulating the delay models, the total delay for a cycle's movement is computed. It should be noted that over-saturated conditions are not considered in this analysis. 

For CAV delay \( D_\text{CAV} \): If the delay \( t_d \) is less than the green time \( G \) (indicating under-saturation), as shown in Figure ~\ref{fig:4} (a)). The average flow rate $\bar{q}$ at time $R + t_d$ is given by $\bar{q}(R + t_d) = c\cdot t_d$. This leads to the following relation for $t_d$:

\begin{equation}
\label{eq:td}
t_d = \frac{\bar{q}\cdot R}{c - \bar{q}}
\end{equation}
The solution exists when the system operates under under-saturation, that is, when \(c > \bar{q}\). The total number of vehicles $N_{\text{total}}$ can be computed as the flow rate times the sum of red time and green time:

\begin{equation}
\label{eq:N}
 N_{\text{total}} = \bar{q}(R + G) 
\end{equation}

Given \cref{eq:distance_models,eq:td,eq:N}, we can derive the total delay ($D_\text{CAV}$) and the average delay ($\hat{D}_\text{CAV}$) of the departure behavior of the CAV-led platoon for the movement in a cycle. The delay for CAV-led platoon (\( D_\text{CAV} \)) is given by:

\begin{equation}
D_{\text{CAV}} = \frac{\bar{q}(R + t_d)^2}{2} - \frac{c t_d^2}{2} = \frac{c\bar{q}R^2}{2(c-\bar{q})}
\end{equation}

The average delay for CAV-lead platoon, \( \hat{D}_\text{CAV} \) is then:

\begin{equation}
\hat{D}_\text{CAV} = \frac{D_{\text{CAV}}}{N_\text{total}} = \frac{R^2}{2(R + G)} + \frac{\bar{q} R^2}{2(c - \bar{q})(R + G)}
\end{equation}

For HDV, if \( T_r + T_a + t_d' < G \) (indicating under-saturation, as shown in Figure~\ref{fig:4} (b), then we have:
\begin{equation}
\label{eq:N1N2}
\left\{
\begin{aligned}
N_1 &= - \frac{c}{T_a} \left(R + T_r\right)\left(R + T_r + T_a\right) \\
& +\frac{c}{2T_a} \left(R + T_r + T_a\right)^2 + \frac{c}{2T_a} \left(R + T_r\right)^2 = \frac{c\,T_a}{2}, \\[1ex]
N_2 &= \bar{q}\left(R + T_r + T_a + t_d'\right).
\end{aligned}
\right.
\end{equation}
where $N_{1}$ denotes the number of vehicles crossing the intersection in the acceleration phase, and $N_{2}$ is the sum of the total number of vehicles crossing the intersection at the acceleration phase and the free flow speed phase.

By geometry, $N_{1}$ and $N_{2}$ satisfy that:
\begin{equation}
\label{eq:N1N2rela}
N_{1}+c\cdot t_{d}^{'}=N_{2}
\end{equation}

Given Eqs. \ref{eq:N1N2} \& \ref{eq:N1N2rela}, the time that the queue diminishes when the green light starts $t_{d}^{'}$ is derived as follows:
\begin{equation}
\label{eq:tdprime}
t_{d}' = \frac{\bar{q}(R+T_r+T_a) - c/2\cdot T_a}{c - \bar{q}}
\end{equation}

Therefore, $N_2$ can be further derived as:
\begin{equation}
\label{eq:N2}
N_2 = \frac{c\cdot T_a}{2} + \frac{\bar{q}(R+T_r)+T_a(\bar{q} - c/2)}{c - \bar{q}} \cdot c
\end{equation}

Given Eqs. \ref{eq:distance_models}, \ref{eq:N1N2}, \ref{eq:N1N2rela}, \ref{eq:tdprime}, and \ref{eq:N2}, the total delay ($D_\text{HDV}$) and the average delay of the HDV-led platoon ($\bar{D}_\text{HDV}$) for the movement in a cycle is derived as follows:  

\begin{equation}
\scalebox{0.75}{$
\begin{aligned}
D_{\text{HDV}} 
&= \frac{1}{2} \bar{q}(R + T_r + T_a + t_d')^2 
    - \frac{1}{2} (N_1 + N_2) t_d'^2 \\
&\quad - \int_{R+T_r}^{R+T_r+T_a} 
    \left( 
        \frac{c}{2T_a} t^2 
        - \frac{c(R+T_r)}{T_a} t 
        + \frac{c}{2T_a}(R^2 + 2T_r R + T_r^2) 
    \right) dt \\
&= \frac{\bar{q}}{2} 
    \left( 
        R + T_r + T_a 
        + \frac{\bar{q}(R+T_r) + T_a(\bar{q} - 0.5c)}{c - \bar{q}} 
    \right)^2 \\
&\quad - \frac{c}{2} 
    \left( 
        T_a + \frac{\bar{q}(R+T_r) + T_a(\bar{q} - 0.5c)}{c - \bar{q}} 
    \right)
    \left( 
        \frac{\bar{q}(R+T_r+T_a) - 0.5c T_a}{c - \bar{q}} 
    \right) \\
&\quad - \frac{c T_a^2}{6}
\end{aligned}
$}
\end{equation}


\begin{equation}
\scalebox{0.75}{$
\begin{aligned}
\bar{D}_{\text{HDV}} 
&= \frac{D_{\text{HDV}}}{N_{\text{total}}} \\[1ex]
&= \frac{1}{2(R+G)} 
    \left( 
        R + T_r + T_a 
        + \frac{ \bar{q}(R+T_r) + T_a\left( \bar{q} - \frac{c}{2} \right) }{ c - \bar{q} }
    \right)^2 \\[1ex]
&\quad 
    - \frac{c}{2\bar{q}(R+G)} 
    \left( 
        T_a 
        + \frac{ \bar{q}(R+T_r) + T_a\left( \bar{q} - \frac{c}{2} \right) }{ c - \bar{q} }
    \right) 
    \left( 
        \frac{ \bar{q}(R+T_r+T_a) - \frac{c}{2} T_a }{ c - \bar{q} }
    \right) \\[1ex]
&\quad 
    - \frac{c T_a^2}{6\bar{q}(R+G)}
\end{aligned}
$}
\end{equation}

Using the total probability theorem, the expected total delay $E[D]$ and expected average delay $E[\bar{D}]$ can be formulated as follows:

\begin{equation}
\scalebox{0.75}{$
\begin{aligned}
E[D] 
&= (1-p) D_{\text{HDV}} + p D_{\text{CAV}} \\[1ex]
&= \frac{\bar{q}(1-p)}{2} 
    \left( 
        R + T_r + T_a 
        + \frac{ \bar{q}(R+T_r) + T_a\left( \bar{q} - 0.5c \right) }{ c - \bar{q} }
    \right)^2 \\[1ex]
&\quad 
    - \frac{c(1-p)}{2} 
    \left( 
        T_a 
        + \frac{ \bar{q}(R+T_r) + T_a\left( \bar{q} - 0.5c \right) }{ c - \bar{q} }
    \right) 
    \left( 
        \frac{ \bar{q}(R+T_r+T_a) - 0.5c T_a }{ c - \bar{q} }
    \right) \\[1ex]
&\quad 
    - \frac{(1-p)c T_a^2}{6}
    + \frac{p c \bar{q} R^2}{2(c - \bar{q})}
\end{aligned}
$}
\end{equation}
Similarly, we have the average delay:

\begin{equation}
\scalebox{0.75}{$
\begin{aligned}
E[\bar{D}] 
&= (1-p) \hat{D}_{\text{HDV}} + p \hat{D}_{\text{CAV}} \\[1ex]
&= \frac{(1-p)}{2(R+G)} 
    \left( 
        R + T_r + T_a 
        + \frac{ \bar{q}(R+T_r) + T_a\left( \bar{q} - 0.5c \right) }{ c - \bar{q} }
    \right)^2 \\[1ex]
&\quad 
    - \frac{c(1-p)}{2\bar{q}(R+G)} 
    \left( 
        T_a 
        + \frac{ \bar{q}(R+T_r) + T_a\left( \bar{q} - 0.5c \right) }{ c - \bar{q} }
    \right) 
    \left( 
        \frac{ \bar{q}(R+T_r+T_a) - 0.5c T_a }{ c - \bar{q} }
    \right) \\[1ex]
&\quad 
    - \frac{(1-p)c T_a^2}{6\bar{q}(R+G)}
    + \frac{p c R^2}{2(c-\bar{q})(R+G)}
\end{aligned}
$}
\end{equation}

\subsection{Optimal cycle length}

Once the analytical delay equation is derived for a single approach of an isolated intersection, the overall delay of the isolated intersection is summed by the delays of each approach. 

\begin{equation}
D_{total} = \sum_{i=1}^{n} E[D_i]
\end{equation}

Then the red time of approach $i$ $(R_i)$ can be represented by:

\begin{equation}
R_i = (1 - \lambda_i) \cdot C
\end{equation}
where $C$ denotes the cycle length of the intersection, $\lambda_i$ is the proportion of the cycle which is effectively green (i.e. $G/C$).

Following Webster's methodology \citep{webster1966traffic}, by differentiating the equation for the total delay with respect to the cycle length ($C$) and setting the resultant derivative to zero, we can determine the optimal cycle length for the intersection with the minimum delay. The formula for the optimal cycle length ($C^*$) is presented below:

\begin{equation}
\resizebox{\linewidth}{!}{$
C^* = \frac{\sum_{i=1}^{n} \frac{c\bar{q_i}}{c-\bar{q_i}} \left[T_a(\lambda_i - \lambda_ip + p - 1) + 2T_r(\lambda_i - \lambda_ip + p - 1)\right]}{\sum_{i=1}^{n} \frac{c\bar{q_i}}{c-\bar{q_i}} (\lambda_i - 1)^2}
$}
\end{equation}

\begin{proposition}
Under our delay model, the total delay of mixed traffic with CAVs and HDVs at an isolated intersection is a monotonically increasing function of the cycle length $C$, where $C \in (0, \infty)$. Therefore, minimizing the cycle length reduces the total delay at the intersection.
\end{proposition}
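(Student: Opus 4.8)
The plan is to regard the total delay $D_{total}$ as an explicit function of the single decision variable $C$, obtained by substituting $R_i=(1-\lambda_i)C$ and $G_i=\lambda_i C$ into each per-approach term $E[D_i]=(1-p)D_{\text{HDV}}+pD_{\text{CAV}}$, and then to show that $\mathrm{d}D_{total}/\mathrm{d}C>0$ for every $C\in(0,\infty)$. The stated conclusion follows at once: a function with an everywhere-positive derivative is strictly increasing, and is therefore minimized by driving $C$ to the smallest feasible value.

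First I would record that both $D_{\text{CAV}}$ and $D_{\text{HDV}}$ are quadratic polynomials in $R_i$, and hence in $C$. The term $D_{\text{CAV}}=\frac{c\bar{q}R^2}{2(c-\bar q)}$ is purely quadratic, while expanding the squared and product expressions in $D_{\text{HDV}}$ shows that its $R^2$-coefficient also collapses to $\frac{c\bar q}{2(c-\bar q)}$, the cross terms cancelling exactly. Therefore $D_{total}(C)=\alpha C^2+\beta C+\gamma$ with
\[
\alpha=\tfrac12\sum_{i=1}^{n}\frac{c\bar{q_i}}{c-\bar{q_i}}(1-\lambda_i)^2 .
\]
Under the maintained under-saturation assumption $c>\bar{q_i}$ and the feasibility ranges $0<\lambda_i<1$, $0\le p<1$, every summand is positive, so $\alpha>0$ and the parabola opens upward. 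Rather than recompute $\beta$ from the linear-in-$R$ terms, I would read the stationary point off the already-derived optimal-cycle formula: setting $2\alpha C+\beta=0$ gives the unique root $C^{*}=-\beta/(2\alpha)$, whose denominator is precisely $2\alpha$.

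The decisive step is the sign analysis of $C^{*}$. Factoring the numerator through $\lambda_i-\lambda_i p+p-1=(1-p)(\lambda_i-1)$ yields
\[
C^{*}=\frac{(T_a+2T_r)(1-p)\sum_{i}\frac{c\bar{q_i}}{c-\bar{q_i}}(\lambda_i-1)}{\sum_{i}\frac{c\bar{q_i}}{c-\bar{q_i}}(\lambda_i-1)^2}.
\]
Here $T_a+2T_r>0$, $1-p>0$ and $\frac{c\bar{q_i}}{c-\bar{q_i}}>0$, while $\lambda_i-1<0$, so the numerator is strictly negative and the denominator strictly positive; hence $C^{*}<0$. Because $\alpha>0$ and the unique vertex $C^{*}$ lies strictly to the left of the origin, we have $\mathrm{d}D_{total}/\mathrm{d}C=2\alpha\,(C-C^{*})>0$ for all $C>0$; equivalently $\beta=-2\alpha C^{*}>0$, so the affine derivative has two positive coefficients on $(0,\infty)$. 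This proves strict monotone increase, and minimizing $C$ minimizes $D_{total}$.

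I expect the main obstacle to be purely computational bookkeeping: verifying that the quadratic and linear coefficients of $D_{\text{HDV}}$ in $R$ are exactly those implied by the $C^{*}$ formula, in particular confirming the cancellation of the $R^2$ cross terms to $\frac{c\bar q}{2(c-\bar q)}$ so that the denominator of $C^{*}$ genuinely equals $2\alpha$. Once that is checked, the sign argument is immediate. A small but essential point worth stating explicitly is the feasibility domain: the expressions for $t_d$ and $t_d'$ already presuppose $c>\bar{q_i}$ and $0<\lambda_i<1$, and these are exactly the hypotheses that guarantee the signs used above (the degenerate case $p=1$ simply sends $C^{*}\to0$, leaving $D_{total}$ still increasing on $(0,\infty)$).
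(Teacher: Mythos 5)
Your proposal is correct and takes essentially the same route as the paper: both exploit that $D_{\text{total}}$ is quadratic in $C$ (via $R_i=(1-\lambda_i)C$) and prove the derivative $2\alpha C+\beta$ is positive on $(0,\infty)$ through the same sign factorization $\lambda_i-\lambda_i p+p-1=(1-p)(\lambda_i-1)$ — your ``upward parabola with vertex $C^*<0$'' is the same argument the paper phrases as termwise non-negativity of $D_{\text{total}}'(C)$. One caveat on your bookkeeping step: if you actually expand the linear-in-$R$ coefficient of $D_{\text{HDV}}$ you get $\tfrac{c\bar q(T_a+2T_r)}{2(c-\bar q)}$, so the paper's $C^*$ numerator (and its stated derivative) carries a spurious factor of $2$ relative to $-\beta/(2\alpha)$; this is harmless here because only the sign of $C^*$ enters, so both proofs stand.
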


\begin{proof}
The total delay function $D_{\text{total}}$, expressed as:


\begin{equation}
\scalebox{0.69}{$
\begin{aligned}
D_{\text{total}} 
&= \sum_{i=1}^{n} 
    \Bigg\{
    \frac{\bar{q}_i(1-p)}{2} 
    \left( 
        R_i + T_r + T_a 
        + \frac{ \bar{q}_i(R_i + T_r) + T_a\left( \bar{q}_i - 0.5c \right) }{ c - \bar{q}_i }
    \right)^2 \\[1ex]
&\quad
    - \frac{c(1-p)}{2} 
    \left( 
        T_a 
        + \frac{ \bar{q}_i(R_i + T_r) + T_a\left( \bar{q}_i - 0.5c \right) }{ c - \bar{q}_i }
    \right)
    \left( 
        \frac{ \bar{q}_i(R_i + T_r + T_a) - 0.5c T_a }{ c - \bar{q}_i }
    \right) \\[1ex]
&\quad
    - \frac{(1-p)c T_a^2}{6}
    + \frac{p c \bar{q}_i R_i^2}{2(c - \bar{q}_i)}
    \Bigg\}
\end{aligned}
$}
\end{equation}

Taking the derivative with respect to the cycle length \( C \):

\begin{equation}
\begin{aligned}
\frac{dD_{\text{total}}}{dC} &= D_{\text{total}}'(C) \\
&= \sum_{i=1}^{n} \frac{c \bar{q}_i}{c - \bar{q}_i} \Big( c(\lambda_i - 1)^2 + T_a(1 - \lambda_i)(1 - p) \\
&\quad + 2T_r(1 - \lambda_i)(1 - p) \Big)
\end{aligned}
\label{eq:dDtotal_dC}
\end{equation}
where the demand-to-capicity ratio of the $i-$th movement \( \lambda_i = \frac{\bar{q_i}}{c} \), $\bar{q_i}$ is the average arrival flow rate of the movement $i$ of the intersection. Given \( \lambda_i \leq 1 \) and \( p \leq 1 \), we have the summation are \textit{non-negative}, we conclude that:

\begin{equation}
\begin{aligned}
D_{\text{total}}'(C) \geq 0, \quad \forall C > 0.
\end{aligned}
\label{eq:ineq}
\end{equation}

Thus, \( D_{\text{total}}(C) \) is \textit{monotonically increasing} in \( C \), and minimizing \( C \) reduces the total delay. \qedhere
\end{proof}

In accordance with the Highway Capacity Manual (HCM, \cite{manual2000highway}),  the cycle length must exceed a minimum value $C_{\text{min}}$ to ensure sufficient time for all required clearance intervals (yellow and all-red phases) and to provide enough green time for critical traffic movements. Shorter cycles would not allow safe and efficient operation at signalized intersections.

\begin{equation}
C_{min} = \frac{E[L] * X_c}{X_{c} - \sum_{i=1}^{n} (\frac{v}{c})_{c_i}}
\end{equation}
where $E[L]$ is the expected total lost time of the intersection. $X_c$ measures the intersection's degree of saturation, representing the proportion of available capacity utilized by critical movements. $(\frac{v}{c})_{c_i}$ represents the flow ratio for critical lane group $i$, and $n$ is the number of the critical groups. 

According to the assumed departure behavior of an HDV-led platoon (illustrated in Figure \ref{fig:delay1}), only the departure platoons led by HDVs will experience the start-up lost time. This is because CAVs, due to their precise control, shorter reaction times, and coordination capabilities, are assumed to depart without incurring start-up delay \cite{zhong2019alternative}. However, regardless of the leading vehicle type of the departure platoon, the clearance lost time $L_c$, associated with the yellow and all-red intervals required for safe phase transitions, always exists and must be included in the calculation. Therefore, the expected total lost time can be calculated as: 
\begin{equation}
E[L] = (1-p)\cdot L_{s} + L_{c}
\end{equation}
where $L_{s}$ denotes as the total start-up lost time, $L_{c}$ represents the total clearance lost time. According to \cite{brown2000mutcd}, we set it to 4 seconds for our numerical experiments.

According to our delay model, the start-up delay of the HDV-led departure platoon can be shown in Figure \ref{fig:delay1}. Given Eq. \ref{eq:N1N2}, it can be calculated as:
\begin{equation}
L_{s} = \sum_{i=1}^{n} (T_{r,i} + T_{a,i} - \frac{N_{1,i}}{c_i}) = \sum_{n=1}^{n} (T_{r,i} + \frac{3T_{a,i}}{2})
\end{equation}
where $N$ denotes the total number of phases at the intersection, $i$ represents the phase number.

\begin{figure}[!h]
\centering
    \includegraphics[width=0.55\textwidth]{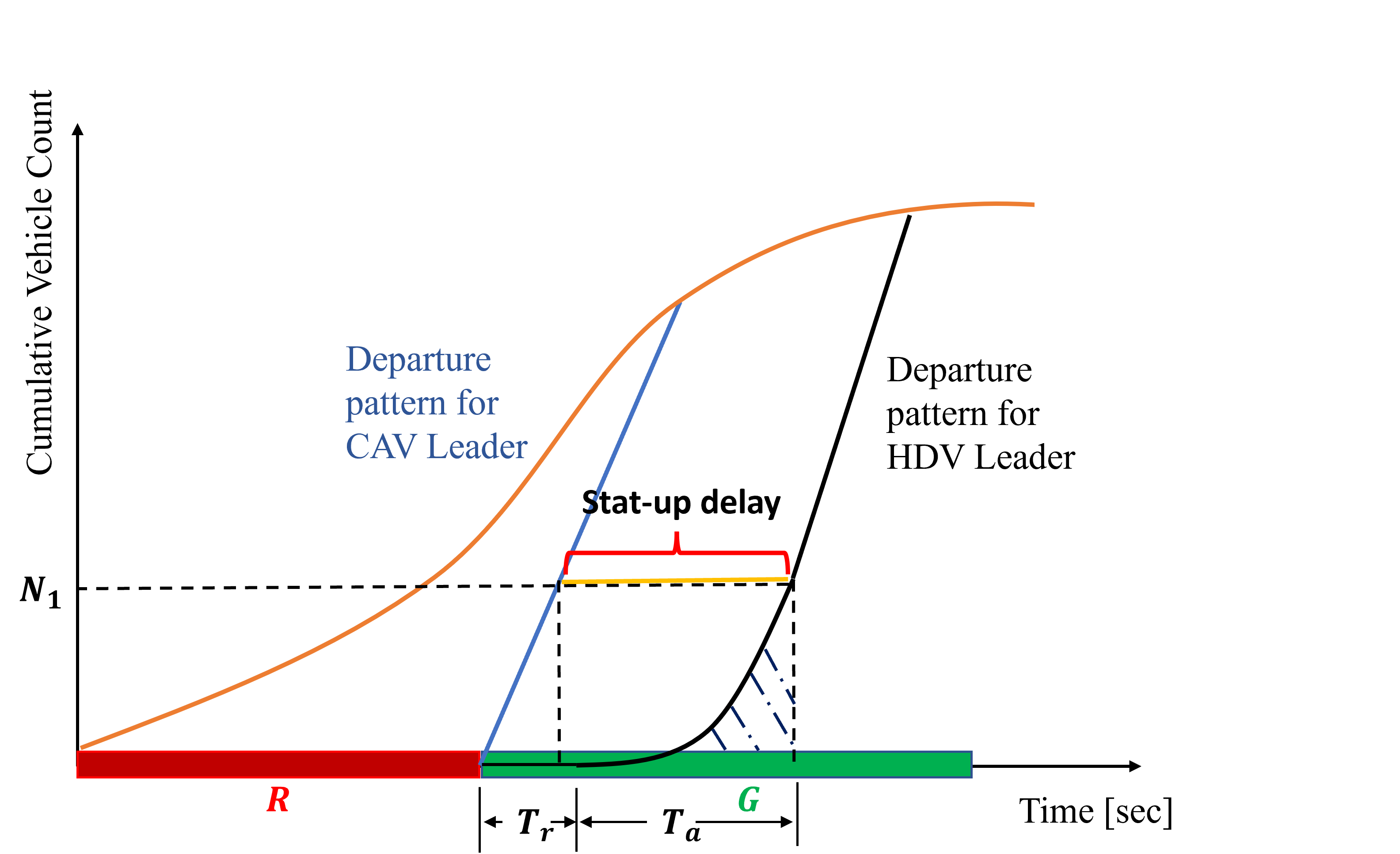}
    \caption{Illustration of start-up delay of the HDV-led platoon}\label{fig:delay1}
\end{figure}

\section{NUMERICAL EXPERIMENTS AND ANALYSIS}
In this section, we conducted several numerical experiments based on the previous derivation to better understand mixed traffic delays. Specifically, we performed a sensitivity analysis to obtain the relationship between the multiple parameters and the isolated intersection delay model under mixed traffic flow. Four key parameters are considered, including cycle length ($C$), CAV penetration rates ($p$), green ratio ($G/C$), and traffic arrival rate ($\bar{q}$), are considered. These parameters were chosen based on their critical role in intersection delay and capacity evaluation. The selection aligns with established methodologies in mixed traffic intersection simulation studies \citep{jiang2022dynamic}. We have divided the numerical experiments into four parts to systematically investigate the key factors influencing intersection performance: 1) the impact of arrival rate ($\bar{q}$), which directly affects the traffic demand level; 2) the impact of the green ratio ($G/C$), which determines the available service time within the cycle; 3) the joint impact of the green ratio and CAV market penetration rate ($G/C$ and $p$), to evaluate how traffic signal efficiency interacts with the increasing presence of automated vehicles; and 4) the joint impact of CAV market penetration rate and cycle length ($p$ and $C$) to assess how CAV adoption influences optimal signal timing design. The performance metric of our experiment is the vehicle average delay of the intersection $E[\bar{D}]$. 

\subsection{Numerical Experiments Settings}
The default value and feasible interval ranges for each parameter of the numerical experiments follow \citep{chen2023stochastic, naylor1997intersection}, which are listed in Table \ref{sample-table}.

\begin{table}[hbt!]
\centering
\caption{Default Value Setting for the Stochastic Capacity Sensitivity Analysis}
\label{sample-table}
\resizebox{\linewidth}{!}{
\begin{tabular}{lcl}
\toprule
\textbf{Parameter} & \textbf{Notation} & \textbf{Value} \\
\midrule
Deviation from Equilibrium Spacing Feedback Gains & \(\omega_{e}\) & 1.2 $\mathrm{sec}^{-2}$ \\
Speed Difference Feedback Gains & \(\omega_{v}\) & 0.5 $\mathrm{sec}^{-1}$ \\
Maximum Vehicle Communication Capacity & \(n\) & 5 vehs \\
Free-flow Speed & \(V_{\text{free}}\) & 15 m/sec \\
Reaction Time & \(T_r\) & 2 sec \\
Acceleration Process Time & \(T_a\) & 3 sec \\
Safe Time Gap & \(\tau_{\text{safe}}\) & 0.3 sec \\
Time Gap for HDV & \(\tau_{\text{HDV}}\) & 1.5 sec \\
Vehicle Length & \(L\) & 5 m \\
Penetration Rate of CAVs & \(p\) & (0, 1) \\
Cycle Length & \(C\) & (60 sec, 120 sec) \\
Green Ratio & \(G/C\) & (0.25, 0.75) \\
Arrival Flow Rate & \(\bar{q}\) & (0.15, 0.35) veh/sec \\
Intersection’s Degree of Saturation & \(X_{c}\) & 0.95 \\
\bottomrule
\end{tabular}
}
\end{table}

\subsection{Experiment results and analysis}
We performed an analysis to evaluate the influence of the traffic arrival rate $\bar{q}$. The mixed traffic arrival rate is set from 0.15 veh/$s$ to 0.35 veh/$s$ (i.e., 540vph to 1260vph) with an increment of 0.025. The expected average delay results for CAV market penetration rate and cycle length with different arrival rates $\bar{q}$ are shown in Figure \ref{fig:delay2} (left). We can conclude that as the traffic arrival rate $\bar{q}$ increases, the expected average delay increases correspondingly. 

We then conducted an analysis to examine the effects of the green ratio $G/C$ on the system. The range of the green ratio is from 0.25 $sec$ to 0.75 $sec$ with an increment of 0.025. Figure \ref{fig:delay2} (right) further shows the expected average delay results for CAV market penetration rate and cycle length with different green ratios ($G/C$). 
\begin{figure}[ht!]
\centering
    \includegraphics[width=0.45\textwidth]{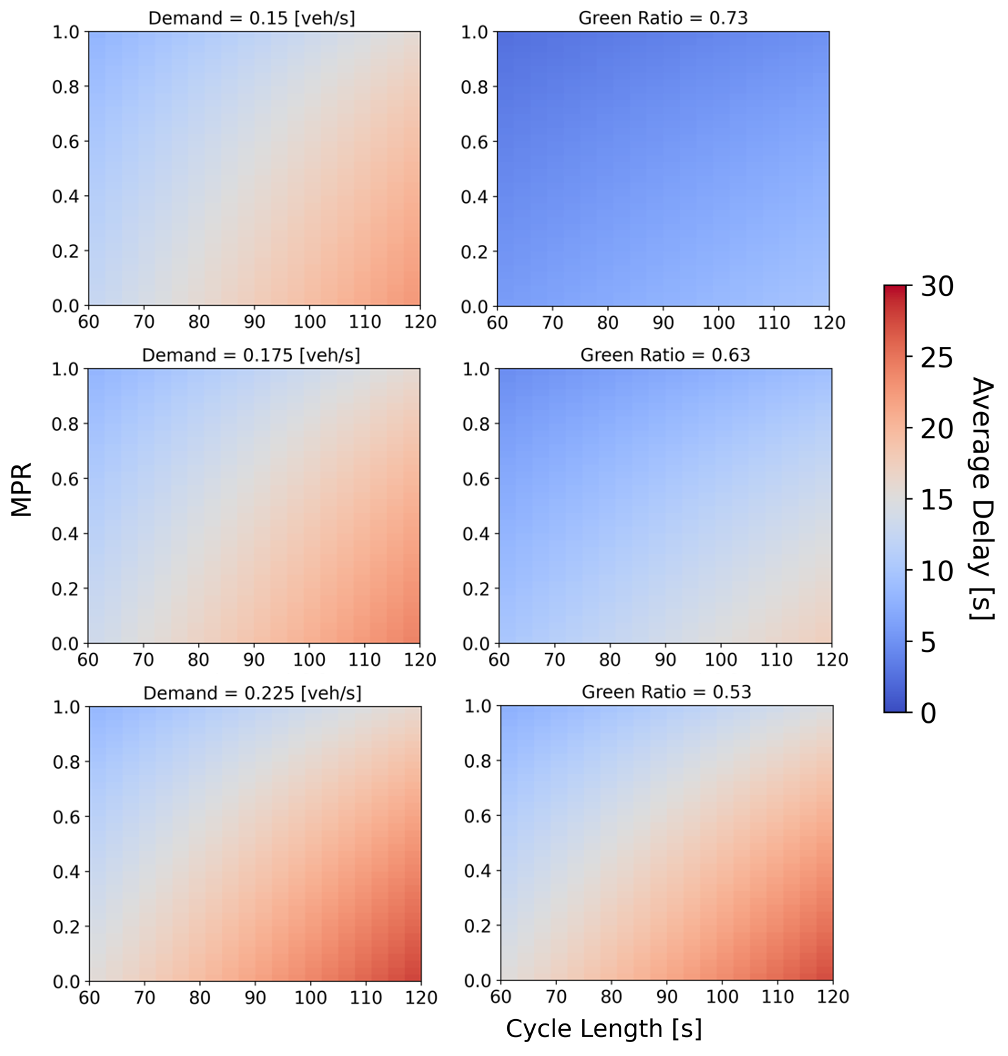}
    \caption{Expected average delay results for CAV market penetration rate and cycle length with different arrival rate $q$}\label{fig:delay2}
\end{figure}
Our findings indicate that the expected average delay decreases simultaneously as the green ratio increases. The Figure \ref{fig:delay2} further demonstrates that the green ratio has greater sensitivity to intersection delay compared to the market penetration rate. Specifically, comparing the right subfigure (impact of the green ratio) with the left subfigure (impact of CAV penetration rate) highlights that variations in the green ratio lead to more pronounced changes in the average delay. It is supported by \cite{hang2020modeling}'s study, which highlights that the green time ratio significantly influences traffic function reliability (TFR), which is closely related to intersection delay. It notes that under varying saturation levels, changes in the green time ratio have a substantial impact on TFR, whereas the effect of CAV penetration rates is comparatively less pronounced. \cite{wang2024traffic}'s study also emphasizes the immediate impact of signal timing adjustments on traffic flow. While increasing CAV penetration improves traffic efficiency, the benefits are more pronounced when combined with optimized signal timings. Research indicates that even with low CAV penetration rates, adjusting signal parameters such as the green ratio can lead to notable reductions in delays.


We conducted an additional analysis to assess the joint impact of CAV market penetration rate and cycle length. Assuming a green ratio of 0.55, the CAV market penetration rate $p$ is set from 0 to 1, and the range of the cycle length $C$ is from 60 $sec$ to 120 $sec$, with increments of 0.025 and 3, respectively. The left two plots of Figure \ref{fig:joint} indicate that the expected average delay increases when the CAV penetration rate $p$ is smaller and the cycle length $C$ is longer. This trend aligns with our expectations and previous studies \citep{li2018piecewise, ma2017parsimonious, zhou2017parsimonious, guo2019joint}, which suggest that CAVs can mitigate intersection delays and that higher market penetration rates lead to reduced delays. In addition, the right two plots of Figure \ref{fig:joint} show that for shorter cycle lengths (e.g., 60s, 70s), increasing $p$ still reduces delay, but the gap between different market penetration rate values is relatively small. As cycle length increases (e.g., 100s, 110s, 120s), the difference in delay reduction between low and high $p$ becomes more pronounced. It reveals that a longer cycle length amplifies the benefits of increasing the market penetration rate, further reducing the intersection control delay. Also, as the green ratio increase from 0.55 to 0.7, the delay reduction decreases significantly. As the green ratio increases, the marginal benefits of higher CAV penetration in reducing delays may diminish, indicating an optimal balance between signal timing and CAV integration.

To further analyze the impact of cycle length, we finally examine the joint influence of the CAV penetration rate and the green ratio on the system. Assuming a cycle length of 100 sec, the CAV market penetration rate is set from 0 to 1, and the $G$ over $C$ ratio range is from 0.25 to 0.75, both with increments of 0.025. Figure \ref{fig:joint} shows the expected average delay results of the CAV market penetration rate and $G$ over $C$ ratio. 
\begin{figure}[htbp!]
\centering
    \includegraphics[width=0.47\textwidth]{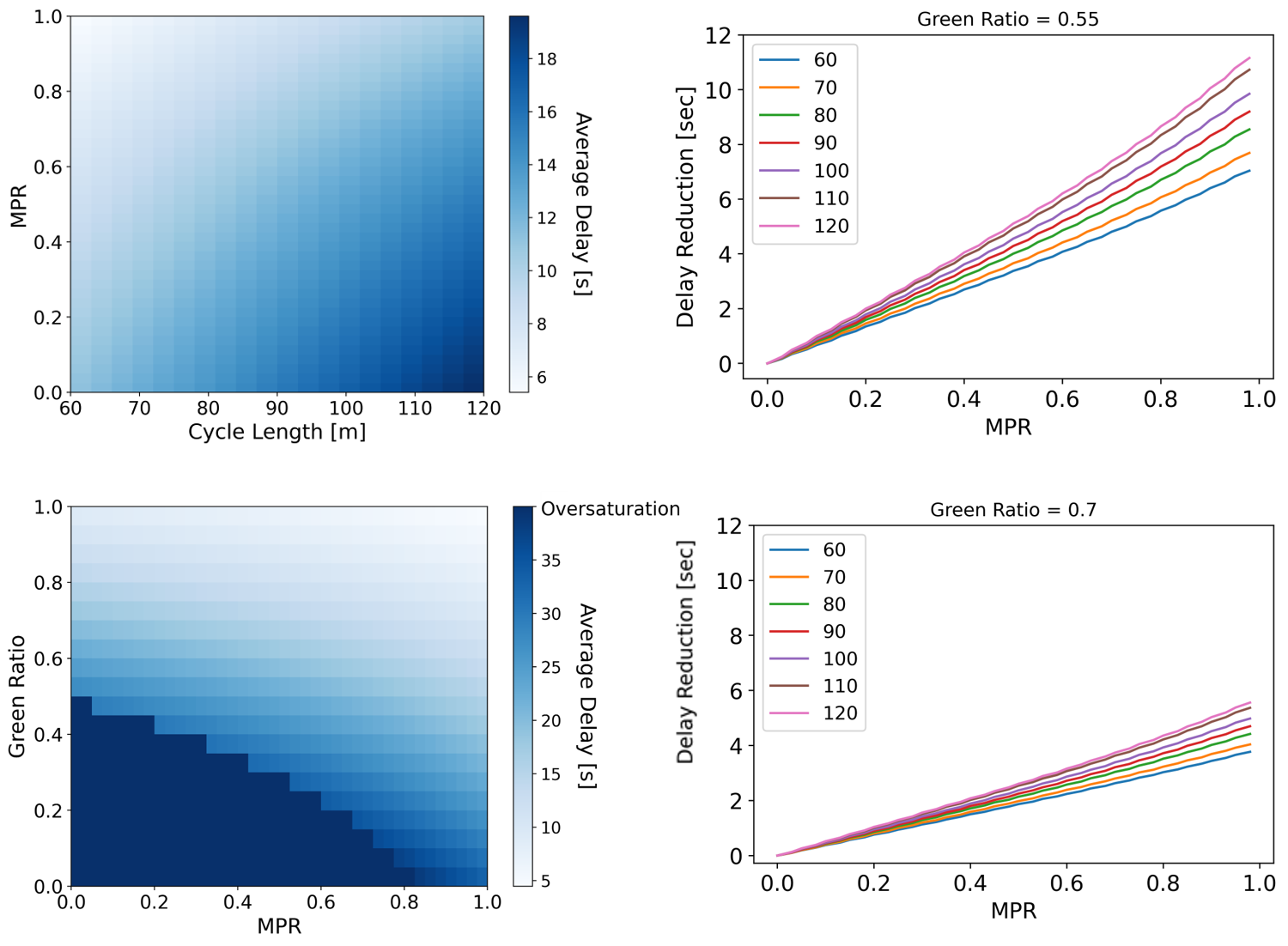}
    \caption{The joint impact on delay time of CAV ($p$ vs $c$) and ($G/c$ ratio vs $p$)}\label{fig:joint}
\end{figure}
The dark blue area indicates an over-saturation condition of the intersection, which is not accounted for in our model. Our results demonstrate that the expected average delay decreases as the market penetration rate $p$ and the green ratio $G/C$ increase. As the green ratio increases, the impact of increasing the market penetration rate on reducing delay will become less significant. Figure \ref{fig:reduction} further illustrates the changes in delay reduction rates across different market penetration rates and cycle lengths. 
\begin{figure}[h!]
\centering
    \includegraphics[width=0.47\textwidth]{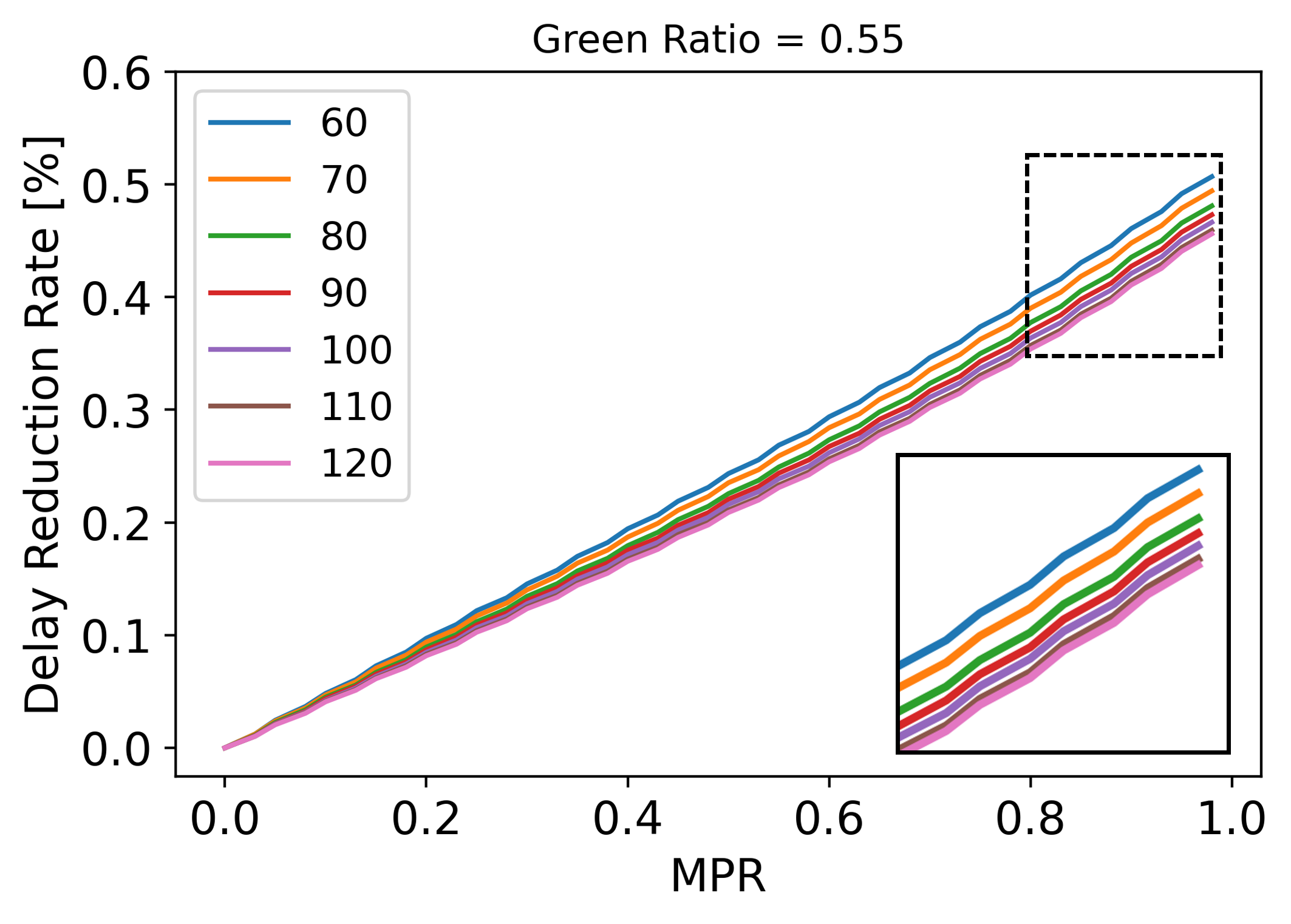}
    \caption{Delay Reduction Rate across different market penetration rate and cycle length}\label{fig:reduction}
\end{figure}
In summary, when the market penetration rate remains constant, an increase in cycle length leads to a lower delay reduction rate, which is consistent with the findings of \citep{koonce2008traffic}. Cycle length selection plays a critical role in intersection performance, as longer cycle length often result in increased delays due to extended red signal times and reduced responsiveness to fluctuating traffic demand.

Generally, numerical experiments have enhanced our understanding of the mixed traffic delay model in isolated intersections. Our numerical experiments  results reveal a negative correlation between the average delay of the intersection and both the green ratio and the market penetration rate. In contrast, the average delay positively correlates with the traffic arrival rate $q$ and cycle length $C$. In general, the results of our study align with the anticipated benefits of CAV in alleviating intersection congestion. Beyond these intuitive trends, a more detailed analysis reveals that the green ratio has a more significant influence on delay reduction compared to the market penetration rate of CAVs, particularly at moderate to high arrival rates. Furthermore, the sensitivity of delay to changes in the cycle length becomes more pronounced when the green ratio is relatively low, suggesting that optimizing the green time allocation is more effective than simply increasing the cycle length. Overall, these findings provide refined insights into intersection control strategies in mixed traffic environments, highlighting the critical importance of maintaining an efficient green ratio even as CAV adoption increases.

\subsection{Optimal cycle length for mixed traffic}
To gain a comprehensive understanding of the optimal cycle length $C^{*}$ for mixed traffic at isolated intersections, we performed numerical experiments on various CAV market penetration rates $p$ (from 0 to 1) to investigate the impact of increasing penetration rates of CAVs on traffic signal optimization and average delay reduction. The penetration rate of the CAVs is set from 0.01 to 0.99 with an increment of 0.02. As illustrated in Figure \ref{fig:delay}, the optimal cycle length decreases with increasing CAV market penetration rates. At a penetration rate of 0.01, the optimal cycle length is approximately 220 seconds. This length gradually reduces to 15 seconds as the penetration rate gradually approaches 1. Similarly, the average delay at the intersection reduces as the CAV market penetration rate increases. Initial penetration rates (0.01) see average delays around 60 seconds. This delay decrease progressively, nearing 4.5 seconds per cycle at nearly pure CAVs. The analysis demonstrates a clear correlation between higher penetration rates and enhanced traffic efficiency because the departure behavior of a CAV-led platoon can pass the intersection without start-up delay. In addition, a shorter cycle length results in a shorter average delay.

\begin{figure}[ht!]
\centering
    \includegraphics[width=0.47\textwidth]{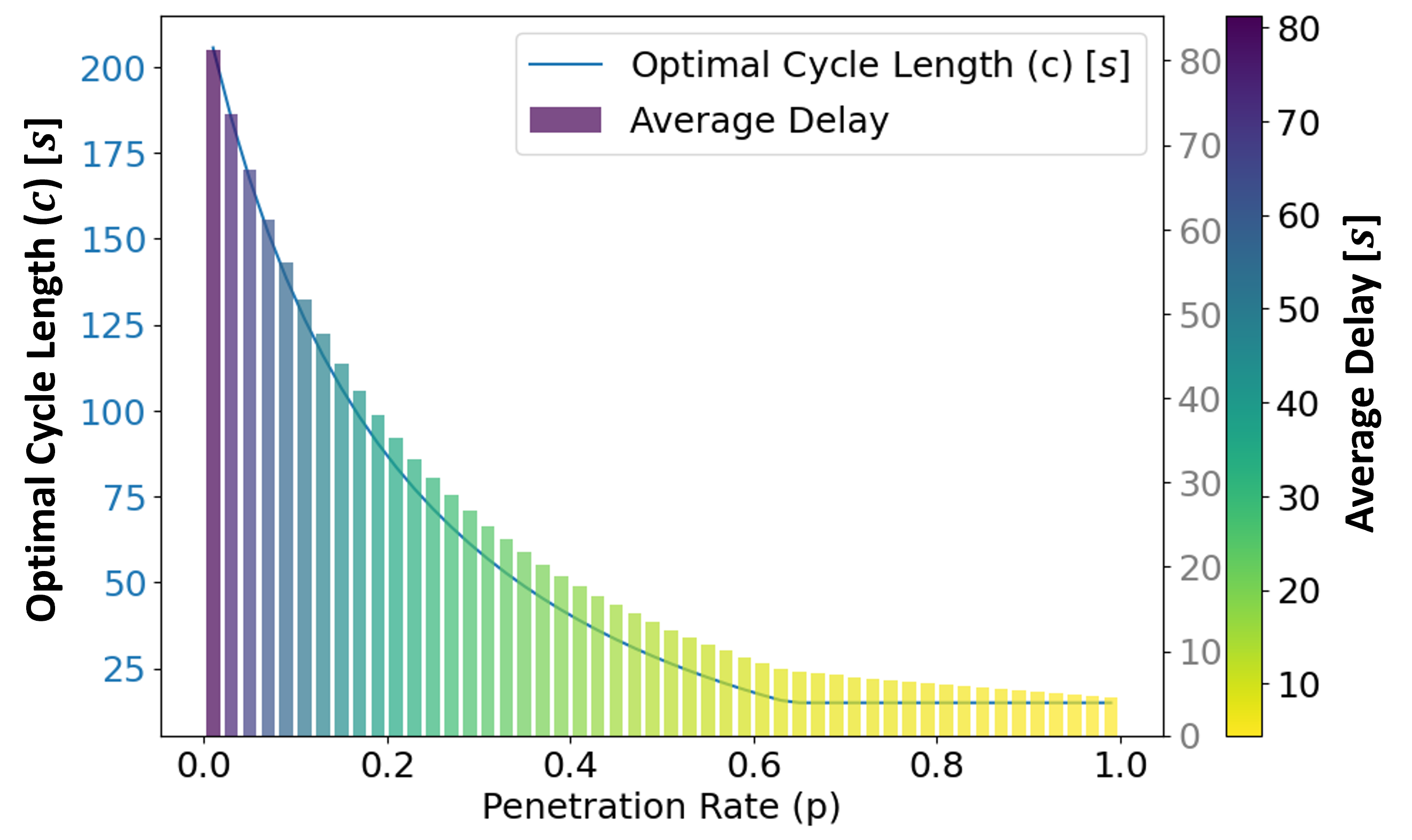}
    \caption{Penetration Rate vs. Optimal Cycle Length and Average Delay}\label{fig:delay}
\end{figure}

\section{CONCLUSION}
This paper presents a Markov-based analytical framework for approximating mixed traffic delay at an isolated signalized intersection and derives the optimal cycle length based on the proposed delay model. The framework systematically incorporates key factors, including CAV market penetration rate, traffic arrival and departure behaviors of intersections, communication capability, and CAV control parameters. Building upon the stochastic capacity analysis of pure CAV traffic developed by \cite{chen2023stochastic}, we extend the formulation to derive the expected stochastic capacity for mixed traffic as a function of the CAV penetration rate. Based on this capacity analysis, we develop an expected delay model for isolated signalized intersections that accounts for the distinct arrival and departure behaviors of CAV-led and HDV-led platoons.

The numerical experiments align with existing studies and further reveal three key insights. First, higher CAV market penetration consistently reduces average delay, driven by improved signal coordination and shorter headways. Second, signal timing optimization remains a critical factor in delay reduction and can yield greater benefits than increasing CAV penetration alone, especially under moderate to high traffic demand. Third, the optimal cycle length decreases as CAV penetration increases, due to shorter start-up delays and smoother trajectories that enable faster and more efficient intersection clearance. The proposed mixed traffic delay model for isolated signalized intersections can be feasibly integrated into existing signal control systems to support planning and evaluation in emerging mixed traffic scenarios.

Some future work can be implemented based on the current results. For instance, communication loss of CAVs can be modeled as a stochastic process in our framework, resulting in an additional departure behavior of the platoon, such as an AV leader, and a more complex transition matrix for the discrete Markov chain. Second, a more realistic traffic environment could be considered, such as stochastic arrival flows. Furthermore, the proposed analytical framework can be extended to over-saturated conditions to ensure generality. Additionally, the model could be validated and calibrated using simulation or field-test datasets to enhance its applicability.

\begin{ac}
The authors confirm contribution to the paper as follows: 
study conception and design: X. Yue, Y. Zhou, Z. Li, Y. Zhang; 
analysis and interpretation of results: X. Yue, Z. Li, Y. Zhou; 
draft manuscript preparation: X. Yue, Z. Li, Y. Zhou. 
All authors reviewed the results and approved the final version of the manuscript.
\end{ac}


\begin{thebibliography}{10}
\providecommand{\url}[1]{\texttt{#1}}
\providecommand{\urlprefix}{URL }
\expandafter\ifx\csname urlstyle\endcsname\relax
  \providecommand{\doi}[1]{doi:\discretionary{}{}{}#1}\else
  \providecommand{\doi}{doi:\discretionary{}{}{}\begingroup \urlstyle{rm}\Url}\fi
\providecommand{\eprint}[2][]{\url{#2}}

\bibitem{wang2019review}
Wang, Y., X.~Li, and H.~Yao.
\newblock Review of trajectory optimisation for connected automated vehicles.
\newblock \emph{IET Intelligent Transport Systems}, Vol.~13, No.~4, 2019, pp. 580--586.

\bibitem{naghsh2018delay}
Naghsh, Z. and S.~Valaee.
\newblock Delay-aware conflict-free scheduling for LTE-V, sidelink 5G V2X vehicular communication, in highways.
\newblock In \emph{2018 52nd Asilomar Conference on Signals, Systems, and Computers}. IEEE, 2018, pp. 1452--1456.

\bibitem{khattak2023impact}
Khattak, Z.~H., J.~Rios-Torres, and M.~D. Fontaine.
\newblock Impact of communications delay on safety and stability of connected and automated vehicle platoons: Empirical evidence from experimental data.
\newblock \emph{IEEE access}, Vol.~11, 2023, pp. 128549--128568.

\bibitem{ma2022lyapunov}
Ma, C., B.~Wang, Z.~Li, A.~Mahmoudzadeh, and Y.~Zhang.
\newblock Lyapunov Function Consistent Adaptive Network Signal Control with Back Pressure and Reinforcement Learning.
\newblock \emph{arXiv preprint arXiv:2210.02612}.

\bibitem{allison2024performance}
Allison, J., Z.~Li, C.~Ma, Y.~Zhang, and X.~B. Wang.
\newblock Performance Evaluation of Reinforcement Learning-Based Adaptive Signal Control Deployment in Urban Network.
\newblock \emph{IFAC-PapersOnLine}, Vol.~58, No.~10, 2024, pp. 120--125.

\bibitem{yue2024hybrid}
Yue, X., H.~Shi, Y.~Zhou, and Z.~Li.
\newblock Hybrid car following control for CAVs: Integrating linear feedback and deep reinforcement learning to stabilize mixed traffic.
\newblock \emph{Transportation Research Part C: Emerging Technologies}, Vol. 167, 2024, p. 104773.

\bibitem{li2018piecewise}
Li, X., A.~Ghiasi, Z.~Xu, and X.~Qu.
\newblock A piecewise trajectory optimization model for connected automated vehicles: Exact optimization algorithm and queue propagation analysis.
\newblock \emph{Transportation Research Part B: Methodological}, Vol. 118, 2018, pp. 429--456.

\bibitem{ma2017parsimonious}
Ma, J., X.~Li, F.~Zhou, J.~Hu, and B.~B. Park.
\newblock Parsimonious shooting heuristic for trajectory design of connected automated traffic part II: Computational issues and optimization.
\newblock \emph{Transportation Research Part B: Methodological}, Vol.~95, 2017, pp. 421--441.

\bibitem{zhou2017parsimonious}
Zhou, F., X.~Li, and J.~Ma.
\newblock Parsimonious shooting heuristic for trajectory design of connected automated traffic part I: Theoretical analysis with generalized time geography.
\newblock \emph{Transportation Research Part B: Methodological}, Vol.~95, 2017, pp. 394--420.

\bibitem{guo2019joint}
Guo, Y., J.~Ma, C.~Xiong, X.~Li, F.~Zhou, and W.~Hao.
\newblock Joint optimization of vehicle trajectories and intersection controllers with connected automated vehicles: Combined dynamic programming and shooting heuristic approach.
\newblock \emph{Transportation research part C: emerging technologies}, Vol.~98, 2019, pp. 54--72.

\bibitem{ghiasi2017mixed}
Ghiasi, A., O.~Hussain, Z.~S. Qian, and X.~Li.
\newblock A mixed traffic capacity analysis and lane management model for connected automated vehicles: A Markov chain method.
\newblock \emph{Transportation Research Part B: Methodological}, Vol. 106, 2017, pp. 266--292.

\bibitem{baz2020intersection}
Baz, A., P.~Yi, and A.~Qurashi.
\newblock Intersection control and delay optimization for autonomous vehicles flows only as well as mixed flows with ordinary vehicles.
\newblock \emph{Vehicles}, Vol.~2, No.~3, 2020, pp. 523--541.

\bibitem{wardrop1952road}
Wardrop, J.~G.
\newblock Road paper. some theoretical aspects of road traffic research.
\newblock \emph{Proceedings of the institution of civil engineers}, Vol.~1, No.~3, 1952, pp. 325--362.

\bibitem{webster1966traffic}
Webster, F.
\newblock Traffic signals.
\newblock \emph{Road research technical paper}, Vol.~56.

\bibitem{manual2000highway}
Manual, H.~C.
\newblock Highway capacity manual.
\newblock \emph{Washington, DC}, Vol.~2, No.~1.

\bibitem{abhigna2022delay}
Abhigna, D., R.~P. Mannil, L.~Govinda, and K.~V.~R. Ravishankar.
\newblock DELAY MODEL FOR UNCONTROLLED INTERSECTIONS IN MIXED TRAFFIC CONDITIONS.
\newblock \emph{Suranaree Journal of Science \& Technology}, Vol.~29, No.~2.

\bibitem{drew1968traffic}
Drew, D.~R.
\newblock \emph{Traffic flow theory and control}.
\newblock Tech. rep., 1968.

\bibitem{harders1968capacity}
Harders, J.
\newblock The capacity of unsignalized urban intersections.
\newblock \emph{Schriftenreihe Strassenbau und Strassenverkehrstechnik}, Vol.~76, 1968, p. 1968.

\bibitem{su2024cycle}
Su, H., W.~Ma, and J.~Zhao.
\newblock Cycle-Length Optimization of Isolated Signalized Intersection with Mixed-Traffic Condition.
\newblock \emph{Journal of Transportation Engineering, Part A: Systems}, Vol. 150, No.~1, 2024, p. 04023129.

\bibitem{ali2021adaptive}
Ali, M. E.~M., A.~Durdu, S.~A. {\c{C}}eltek, and A.~Yilmaz.
\newblock An adaptive method for traffic signal control based on fuzzy logic with webster and modified webster formula using SUMO traffic simulator.
\newblock \emph{IEEE Access}, Vol.~9, 2021, pp. 102985--102997.

\bibitem{saha2017delay}
Saha, A., S.~Chandra, and I.~Ghosh.
\newblock Delay at signalized intersections under mixed traffic conditions.
\newblock \emph{Journal of Transportation Engineering, Part A: Systems}, Vol. 143, No.~8, 2017, p. 04017041.

\bibitem{jiang2017eco}
Jiang, H., J.~Hu, S.~An, M.~Wang, and B.~B. Park.
\newblock Eco approaching at an isolated signalized intersection under partially connected and automated vehicles environment.
\newblock \emph{Transportation Research Part C: Emerging Technologies}, Vol.~79, 2017, pp. 290--307.

\bibitem{wang2020cooperative}
Wang, C., S.~Gong, A.~Zhou, T.~Li, and S.~Peeta.
\newblock Cooperative adaptive cruise control for connected autonomous vehicles by factoring communication-related constraints.
\newblock \emph{Transportation Research Part C: Emerging Technologies}, Vol. 113, 2020, pp. 124--145.

\bibitem{bian2019reducing}
Bian, Y., Y.~Zheng, W.~Ren, S.~E. Li, J.~Wang, and K.~Li.
\newblock Reducing time headway for platooning of connected vehicles via V2V communication.
\newblock \emph{Transportation Research Part C: Emerging Technologies}, Vol. 102, 2019, pp. 87--105.

\bibitem{zhou2019robust}
Zhou, Y. and S.~Ahn.
\newblock Robust local and string stability for a decentralized car following control strategy for connected automated vehicles.
\newblock \emph{Transportation Research Part B: Methodological}, Vol. 125, 2019, pp. 175--196.

\bibitem{li2024enhancing}
Li, Z., Y.~Zhou, Y.~Zhang, and X.~Li.
\newblock Enhancing vehicular platoon stability in the presence of communication Cyberattacks: A reliable longitudinal cooperative control strategy.
\newblock \emph{Transportation Research Part C: Emerging Technologies}, Vol. 163, 2024, p. 104660.

\bibitem{chen2023stochastic}
Chen, T., S.~Gong, M.~Wang, X.~Wang, Y.~Zhou, and B.~Ran.
\newblock Stochastic capacity analysis for a distributed connected automated vehicle virtual car-following control strategy.
\newblock \emph{Transportation research part C: emerging technologies}, Vol. 152, 2023, p. 104176.

\bibitem{zhou2017rolling}
Zhou, Y., S.~Ahn, M.~Chitturi, and D.~A. Noyce.
\newblock Rolling horizon stochastic optimal control strategy for ACC and CACC under uncertainty.
\newblock \emph{Transportation Research Part C: Emerging Technologies}, Vol.~83, 2017, pp. 61--76.

\bibitem{zhong2019alternative}
Zhong, Z. and E.~E. Lee.
\newblock Alternative intersection designs with connected and automated vehicle.
\newblock In \emph{2019 IEEE 2nd Connected and Automated Vehicles symposium (CAVS)}. IEEE, 2019, pp. 1--6.

\bibitem{brown2000mutcd}
Brown, L.~L.
\newblock MUTCD.
\newblock \emph{Public Roads}, Vol.~63, No.~4, 2000, pp. 43--43.

\bibitem{jiang2022dynamic}
Jiang, X. and Q.~Shang.
\newblock A dynamic CAV-dedicated lane allocation method with the joint optimization of signal timing parameters and smooth trajectory in a mixed traffic environment.
\newblock \emph{IEEE Transactions on Intelligent Transportation Systems}, Vol.~24, No.~6, 2022, pp. 6436--6449.

\bibitem{naylor1997intersection}
Naylor, D.~W. and J.~R. Graham.
\newblock Intersection Design and Decision--Reaction Time for Older Drivers.
\newblock \emph{Transportation research record}, Vol. 1573, No.~1, 1997, pp. 68--71.

\bibitem{hang2020modeling}
Hang, J., X.~Zhou, and J.~Wang.
\newblock Modeling Traffic Function Reliability of Signalized Intersections with Control Delay.
\newblock \emph{Advances in Civil Engineering}, Vol. 2020, No.~1, 2020, p. 8894281.

\bibitem{wang2024traffic}
Wang, X., Z.~Jerome, Z.~Wang, C.~Zhang, S.~Shen, V.~V. Kumar, F.~Bai, P.~Krajewski, D.~Deneau, A.~Jawad, et~al.
\newblock Traffic light optimization with low penetration rate vehicle trajectory data.
\newblock \emph{Nature communications}, Vol.~15, No.~1, 2024, p. 1306.

\bibitem{koonce2008traffic}
Koonce, P. et~al.
\newblock \emph{Traffic signal timing manual}.
\newblock Tech. rep., United States. Federal Highway Administration, 2008.

\end{thebibliography}
\end{document}